\newcommand{\oomit}[1]{}

\documentclass[runningheads]{llncs}

\newif\ifdraft
\drafttrue

\ifdraft
\usepackage[draft,nompar]{commenting}
\else
\usepackage[nompar]{commenting}
\fi

\declareauthor{ts}{ts}{purple}
\authorcommand{ts}{comment}

\declareauthor{hao}{hao}{blue}
\authorcommand{hao}{comment}
\usepackage{amsfonts}

\usepackage{latexsym,amsmath,amsthm,epsfig,amssymb,graphics,amscd,amsfonts}
\usepackage{float}
\usepackage{comment}
\usepackage{verbatim}
\usepackage{color}
\usepackage[usenames,dvipsnames,table]{xcolor}
\usepackage{indentfirst}
\usepackage{latexsym,amsmath,epsfig,amssymb,graphics}
\allowdisplaybreaks
\usepackage{stmaryrd}
\usepackage{mathrsfs}
\usepackage{algorithm, algorithmic}
\usepackage{amsfonts}
\usepackage{makeidx}
\usepackage{mathtools}
\usepackage{makecell}
\usepackage{multirow}
\usepackage{siunitx}
\usepackage{caption}
\usepackage{subcaption}
\usepackage{wrapfig}
\usepackage{booktabs}

\usepackage{bbding}

\usepackage{tikz}
\usetikzlibrary{external}
\usepackage[algo2e, ruled, lined, boxed, linesnumbered, commentsnumbered]{algorithm2e}
\SetKwComment{algocomment}{$\triangleright$\ }{}

\SetCommentSty{mycommfont}
\SetKwInput{KwParam}{Hyper-parameters}
\SetKwInput{KwInput}{Input}

\usepackage[hyperfootnotes=true,colorlinks,linkcolor=RedViolet,anchorcolor=RedViolet,citecolor=blue,urlcolor=black]{hyperref}
\usepackage[capitalize]{cleveref}
\crefformat{equation}{(#2#1#3)}
\crefname{enumi}{}{}

\newtheorem{assumption}{Assumption}

\usepackage[misc]{ifsym}
\usepackage{orcidlink}
\usepackage{bbding}

\usepackage[dvipsnames]{xcolor}

\floatname{algorithm}{Algorithm}

\input{macro}

\begin{document}

\title{Solving Stochastic Constraints by Oracle-based Gradient Descent and Interval Arithmetic}


\titlerunning{Solving Stochastic Constraints with Formal Guarantees}


 \author{
 Xiakun Li\inst{1} \orcidlink{0009-0007-1663-1287}\and
 Hao Wu\inst{2} \orcidlink{0000-0001-9368-4744} \and
 Bican Xia\inst{1}  \inst{(}\Envelope\inst{)} \orcidlink{}\and
 Tengshun Yang\inst{2} \orcidlink{}\and
 Naijun Zhan\inst{3}
 }
 \authorrunning{Xiakun Li, Hao Wu, Bican Xia, Tengshun Yang and Naijun Zhan}

 \institute{
 School of Mathematical Science, Peking University, Beijing, China\and
 KLSS, Institute of Software, University of Chinese Academy of Sciences, China \and 
 Key Laboratory of High Confidence Software Technology, School of Computer Science, Peking University, Beijing, China\\
 }

\maketitle

\begin{abstract}

Stochastic constraints, which incorporate both deterministic parameters and random variables, extend classical deterministic constraints by explicitly accounting for uncertainty. These constraints are increasingly prevalent in data science, artificial intelligence, and bioinformatics; however, solving them requires addressing quantitative satisfaction problems that remain a significant challenge in computer science. In this paper, we propose a novel framework for deciding deterministic parameters that maximize the satisfaction probability. 
Our approach features a unique synergy between stochastic optimization and symbolic techniques: at the high level, it employs \emph{oracle-based stochastic gradient descent} to identify high-quality parameter candidates, while at the low level, it utilizes \emph{interval arithmetic} to compute rigorously certified lower bounds. This framework produces a sequence of sound and increasingly tight lower bounds for the true maximum satisfaction probability, supported by a high-probability convergence guarantee. We demonstrate the effectiveness and efficiency of our approach through its application to Stochastic Satisfiability Modulo Theories (SSMT) problems and a stochastic trajectory planning task.

\keywords{Stochastic Constraint \and Optimization \and Stochastic Gradient Descent \and Interval Arithmetic \and Stochastic Satisfiability Modulo Theories}
\end{abstract}

\section{Introduction}

Constraint Satisfaction Problems (CSP) serve as a cornerstone of computer science, driving the development of a diverse array of techniques, most notably Satisfiability Modulo Theories (SMT) \cite{barrett21smt,de2008z3}, for efficient reasoning and automated solving \cite{rossi2006handbook}.
Historically, the bulk of these advancements has focused on deterministic CSPs, where all variables and constraints are fixed and certain. However, the rapid evolution of data science, artificial intelligence, and Socio-Cyber-Physical Systems has shifted the focus toward \emph{stochastic constraint problems} (SCP). These problems are not only more practically relevant in the face of real-world uncertainty but also theoretically more challenging.




In this paper, we investigate an expressive subclass of SCP wherein constraints $\varphi(\boldsymbol{x}, \boldsymbol{y})$ are given by formulas within the first-order theory of the reals. Here, $\boldsymbol{x}$ denotes a vector of deterministic parameters, while $\boldsymbol{y}$ represents random variables governed by some known probability distributions. Under this formulation, solving the stochastic constraints is cast as a quantitative optimization task: deciding the optimal parameter  $\boldsymbol{x}^*$ that maximizes the satisfaction probability of the constraint.



The problem described above can be framed as a special case of the stochastic optimization problems prevalent in supervised learning~\cite{goodfellow2016deep,shalev2014understanding}. Specifically, it corresponds to the minimization of an expected $0$-$1$ loss function, where $1$ represents constraint satisfaction and $0$ represents a violation~\cite[Sec.~2.4]{berger85book}. 
Such loss functions are highly non-convex and non-smooth, posing significant challenges for obtaining numerical solutions with probabilistic guarantees and provable convergence rates~\cite{bartlett2006convexity}. 



In this work, we introduce a novel approach to address this problem. Our framework synergistically combines \emph{oracle-based stochastic descent gradient} \cite{jin2021high}, which is used to iteratively recommend promising parameter candidates, with \emph{interval arithmetic} \cite{granvilliers2006algorithm,van1997solving}, which is used to compute lower bounds for the satisfaction probability for a given candidate. By alternatively applying these two components, our algorithm produces an increasing sequence of certified lower bounds that converges, in probability, to the true maximal satisfaction probability. 
Furthermore, we propose a new concentration inequality (see \cref{lemma:oracle_error_bound}) that removes an independence assumption in \cite{jin2021high}. This enables us to apply the analysis framework of \cite{jin2021high,Jin2024High,Jin2024Sample} to establish an improved high-probability complexity bound for our algorithm.
We demonstrate that our framework is readily applicable to a broad class of \emph{Stochastic Satisfiability Modulo Theories (SSMT)} problems over continuous domains~\cite{gao2015solving}, as well as complex \emph{trajectory planning tasks} in uncertain environments. 

\oomit{
\paragraph{Contributions.}
The main contributions of this paper are summarized as follows:
\begin{itemize}
    \item We propose a novel algorithmic framework for computing the maximum satisfaction probability of semi-algebraic stochastic constraints and the corresponding optimal parameters.
    \item 
    \item 
\end{itemize} }

\paragraph{Related Work.}
\oomit{The solution of deterministic constraints, formalized as Constraint Satisfaction Problems (CSPs) \cite{rossi2006handbook}, is a mature field with well-established modeling paradigms and powerful solvers (e.g., Z3 \cite{de2008z3}) capable of tackling large-scale problems across various domains.}

A closely related field is \emph{stochastic constraint programming}, which traditionally focuses on constraints involving decision variables and \emph{discrete} random variables. These problems are typically solved via scenario-based methods~\cite{walsh2002stochastic,campi2018introduction}. Scenario optimization is a data-driven technique that transforms a stochastic problem into a deterministic counterpart by considering a finite number of sampled realizations, or \emph{scenarios}. By satisfying the constraints across these sampled instances, one can derive formal probabilistic feasibility guarantees for the original problem. Recent extensions have further integrated \emph{confidence-based reasoning} to provide such guarantees from finite samples \cite{confidencescp}.

Another significant area is \emph{chance-constrained programming} (CCP) \cite{charnes1958,dynamic2022}, where constraints are permitted to be violated with a probability below a pre-defined threshold. However, CCP typically relies on structural convexity or specific distributional assumptions to maintain tractability. Crucially, our work diverges from these approaches by addressing constraints expressed by complex logical combinations of polynomial inequalities, rather than the single or joint inequalities standard in CCP. Our framework thus provides a more expressive symbolic-numeric alternative for scenarios with non-convex constraints and continuous uncertainty.

Within the domain of formal verification, \emph{Stochastic Satisfiability Modulo Theories (SSMT)} \cite{franzle2008stochastic,teige2008stochastic} serves as a foundational model for formalizing stochastic constraint satisfaction problems. SSMT extends classical SMT by introducing a \emph{randomized quantifier}, which enables the expression of stochastic constraints involving random variables. While initial research focused primarily on discrete domains and discrete distributions, the framework was later extended in \cite{gao2015solving} to include continuous random variables and deterministic variables over interval domains. They proposed a solving architecture combining DPLL-style search \cite{davis1962machine} with \emph{interval arithmetic} \cite{granvilliers2006algorithm}; however, their approach lacks formal convergence properties and a rigorous complexity analysis.

Our work is distinguished from the aforementioned methodologies in several key aspects. First, while existing methods for stochastic constraint programming primarily target discrete random variables, and chance-constrained programming often addresses individual or joint constraints under restrictive assumptions, our framework directly handles complex joint probabilistic constraints over continuous random variables defined by first-order polynomial formulas. 
Second, whereas prior research typically seeks a single feasible solution that satisfies a pre-defined probability threshold, our framework provides a \emph{tight certified lower bound} for the maximum satisfaction probability, representing a more general and technically demanding optimization problem that bridges the gap between probabilistic search and formal certification.



\paragraph{Paper Organization.}
Section~\ref{section 2} introduces the necessary background. Section~\ref{section 3} explains the computation of lower bounds using interval arithmetic. Section~\ref{section 4} presents our main algorithm, and a theoretical analysis of its convergence and complexity is given in Section~\ref{section 5}. Section~\ref{section 6} reports the implementation and experimental results. 
Finally, we conclude in Section~\ref{section 7}.


\section{Preliminaries}\label{section 2}

In this section, we introduce necessary preliminaries and formally define the problem of interest.

\paragraph{Basic Notations.} 
Let $\Real$ denote the set of real numbers.
We use boldface letters to denote vectors of variables, such as $\seq{x}:=(x_1,\dots,x_n)$. 
Throughout this paper, we always assume that variables are real-valued and polynomials have real coefficients.
A semialgebraic set is a subset of the real space that is defined by a polynomial formula, i.e., all atomic predicates are polynomial inequalities.
Given a function $f:\Real^n\to \Real$, we denote its gradient by $\nabla f:= (\frac{\partial f}{\partial  x_1}, \dots, \frac{\partial f}{\partial x_n})$. 
Given a constant $L>0$, a function $f:\Real^n\to \Real$ is said to be $L$-\emph{Lipschitz continuous} if $\Vert \nabla f(\boldsymbol{x}) - \nabla f(\boldsymbol{x}')\Vert \leq L \Vert  \boldsymbol{x} - \boldsymbol{x}'\Vert$ for any $\boldsymbol{x},\boldsymbol{x'}\in \Real^n$, where $\|\cdot\|$ is the standard Euclidean norm.
A function $f:\Real^n\to \Real$ is called \emph{absolutely continuous} if $\forall \varepsilon>0$, $\exists \delta >0$ such that for any finite collection of disjoint \( n \)-dimensional rectangles \( \{R_i\} \) with \( \sum_i \operatorname{vol}(R_i) < \delta \), where $\operatorname{vol}(R_i)$ is the volume of $R_i$, we have \( \sum_i \operatorname{osc}_{R_i}(f) < \varepsilon \), where $\operatorname{osc}_{R_i}(f):=\sup_{\boldsymbol{x},\boldsymbol{y}\in R_i}|f(\boldsymbol{x})-f(\boldsymbol{y})|$.

We recall some basic concepts from probability theory and recommend referring to \cite{renyi2007probability} for an in-depth treatment.
Let $\Omega$ be a \textit{sample space}, which represents all possible outcomes of a random trial. An \emph{event} is a subset of~$\Omega$. 
A pair $(\Omega,\mathcal{F})$ is said to be a \textit{measure space} if $\mathcal{F}$ is a $\sigma$-algebra of measurable events. 
A triple $(\Omega,\mathcal{F},\Prob)$ is said to be a \textit{probability space} if $\Prob$ is a probability measure satisfies that (1)~$\Prob(\emptyset) = 0$; (2)~$\Prob$ is countably additive: $\Prob(\cup_{k=1}^{\infty} A_k) = \sum_{k=1}^{\infty} \Prob(A_k),\ \forall A_1,A_2,A_3,\dots \in \mathcal{F},\ A_i \cap A_j = \emptyset\ (\forall i\neq j)$; and (3)~$\Prob(\Omega) = 1$.
Let $\mathcal{B}$ be the Borel set, which is the $\sigma$-algebra generated by all the open sets in $\Real$. 
A \emph{random variable} is a function $X:\Omega \rightarrow \Real$ such that $X^{-1}(B) \in \mathcal{F},\ \forall B \in \mathcal{B}$. 
The \textit{distribution function}
of $X$ is a function $F_X: \Real\rightarrow [0,1]$ defined by $F_X(x) := \Prob(X\leq x)$ for all $x\in \Real$.
If there exists a non-negative, integrable function $\pi:\Real \rightarrow \Real$ such that $F_X(x) = \int_{-\infty}^x \pi(t)\dd t$, then $\pi(x)$ is called the \textit{density function} of $X$, which follows $\Prob (X\in A) = \int_{A} \pi(x)\dd x$ for any $A \in \mathcal{F}$. 
The \textit{expectation} of $X$, denoted by $\Expect [X]$, is defined by the measure integral $\Expect[X] := \int_{\Real} x \pi(x) \dd x$. 
Given $A\in \mathcal{F}$,
the \textit{indicator function} of $A$ is a random variable 
defined as $\ind_{A}(\omega) := 1$ when $\omega \in A$ and $\ind_{A}(\omega) := 0$ otherwise.
A random variable $X$ is said to be sub-exponential with parameters $(\nu, b)$, where $\nu,b>0$, iff $\Expect\exp\left(\lambda(X-\Expect X)\right)\leq \exp(\lambda^2\nu^2/2),\ \forall \lambda\in [0,\frac{1}{b}]$. 




\paragraph{Problem Formulation.}
We consider a \emph{stochastic constraint} to be expressed as a polynomial formula 
\begin{align}
    \varphi(\boldsymbol{x}, \boldsymbol{y}) := \bigwedge_{i=1}^k \bigvee_{j=1}^l \left( P_{i,j} (\boldsymbol{x}, \boldsymbol{y}) \geq 0 \right),
\end{align}
where $\boldsymbol{x} \in \mathbb{R}^n$ are \emph{deterministic parameters} within the domain $D \subseteq \mathbb{R}^n$,
$\boldsymbol{y} \in \mathbb{R}^m$ are \emph{continuous random variables} subject to a known probability density function $f_{\boldsymbol{y}}(\cdot)$,
and each atomic proposition $P_{i,j} (\boldsymbol{x}, \boldsymbol{y}) \geq 0$ is a polynomial inequality. 
Without loss of generality, we assume that for any fixed $\boldsymbol{x}\in D$, each polynomial $P_{i,j}(\boldsymbol{x}, \cdot)$ has positive degree 
in $\boldsymbol{y}$; otherwise,  constraints independent of $\boldsymbol{y}$ can be absorbed into the definition of the parameter domain $D$.

For a fixed parameter $\boldsymbol{x} \in D$, the \emph{satisfaction probability} $V_\varphi(\boldsymbol{x})$ is defined as the probability that the constraint $\varphi$ holds, i.e.,
\begin{equation}\label{eq:sat-prob}
    V_\varphi(\boldsymbol{x}) := \mathbb{P}_{\boldsymbol{y}}\left( \varphi(\boldsymbol{x}, \boldsymbol{y}) \text{ is true} \right) = \int_{S(\boldsymbol{x})} f_{\boldsymbol{y}}(\boldsymbol{y}) \, d\boldsymbol{y}.
\end{equation}
where $S(\boldsymbol{x}) := \{\boldsymbol{y}\in \Real^m \mid \varphi(\boldsymbol{x}, \boldsymbol{y})\}$.

The \textbf{Maximum Satisfaction Problem} asks that,
given a stochastic constraint $\varphi(\boldsymbol{x},\boldsymbol{y})$, to find the optimal parameter $\boldsymbol{x}^* \in D$ (if exists) that maximizes the satisfaction probability $V_\varphi(\seq{x})$, i.e., to solve the following optimization problem
\begin{equation}\label{eq:max-problem}
    \max_{\boldsymbol{x} \in D} \quad V_\varphi(\boldsymbol{x}).
\end{equation}
The maximum satisfaction probability is denoted by $V_{\varphi}^*:=V_{\varphi}(\seq{x}^*)$.

Since the probability density function $f_\seq{y}$ can involve transcendental functions, finding the exact solution of the above problem is in general uncomputable~\cite{richardson69jsl-undecidable}.
Therefore, our primary goal is to compute a near-optimal parameter candidate $\seq{x}^+\in D$ as well as $V_\varphi(\seq{x}^+)$, which serves as a ``tight'' lower bound to the maximum satisfaction probability~$V_{\varphi}^*$.
To make the problem tractable, we adopt two assumptions.

\begin{assumption}\label{assump:domain}
    The domain $D\subset \Real^n$ of parameters $\seq{x}$ is compact, i.e., closed and bounded.
\end{assumption}

\begin{assumption}\label{assump:measure}
    The probability measure of random variables $\seq{y}$ is absolutely continuous with respect to (w.r.t.) the Lebesgue measure over $\mathbb{R}^m$.
    Moreover, for any axis-aligned box $I\subseteq \mathbb{R}^m$, the following probability mass is computable:
    \begin{align}
        \mathbb{P}_{\seq{y}} (\boldsymbol{y} \in I) = \int_{I} f_{\boldsymbol{y}}(\boldsymbol{y}) \, \mathrm{d}\boldsymbol{y}.
    \end{align}
\end{assumption}

\cref{assump:domain} is necessary for our convergence guarantee presented in \cref{section 5}. This is usually reasonable in many practical scenarios, and our algorithm remains sound even without this assumption.
\cref{assump:measure} ensures that $V_\varphi(\seq{x})$ in \cref{eq:sat-prob} is well-defined, and is satisfied by most common continuous distributions (e.g., uniform, normal, exponential). 

\begin{remark}\label{remark:ineq}
    Whether the constraint $\varphi(\seq{x},\seq{y})$ contain strict inequalities do not affect our algorithm's execution and formal guarantees.
    This is because, for any fixed $\boldsymbol{x}$, the set $\{\boldsymbol{y} \mid P_{i,j} (\boldsymbol{x}, \boldsymbol{y}) = 0\}$ is a subset of the zero set of a non-constant polynomial, which has Lebesgue measure zero (i.e., probability zero). 
\end{remark}

\section{Proving Lower Bounds via Interval Arithmetic}\label{section 3}

In this section, we address a very special case of \cref{eq:max-problem}:
\begin{equation*}
    \text{To compute } V_{\varphi}(\seq{x}^+) \text{ for a fixed } \seq{x}^+\in D.
\end{equation*}
By leveraging the concept of \emph{lower-bounding functions}, we demonstrate how to compute a certified and arbitrarily tight lower bound using \emph{interval arithmetic}. The procedure presented here serves as a computational primitive for our main algorithm in \cref{section 4} and provides the formal foundation for our soundness result.

\paragraph{Lower-Bounding Functions.}
The primary challenge in directly maximizing $V_\varphi(\boldsymbol{x})$ in \cref{eq:max-problem} is the integration over an implicitly defined semialgebraic region $S(\seq{x})$. To circumvent this difficulty, we employ a surrogate model of the target function $V_\varphi(\boldsymbol{x})$, characterized by the following definition.

\begin{definition}[Lower-Bounding Function]\label{def:lbf}
    Let $V_{\varphi}(\boldsymbol{x})$ be the satisfaction probability function as in \eqref{eq:sat-prob} .
    A function $B: D \to [0,1]$ is called a \emph{lower-bounding function} for $V_{\varphi}(\boldsymbol{x})$ if it satisfies:
    \begin{align}
        V_{\varphi}(\boldsymbol{x}) \geq B(\boldsymbol{x}) \quad \text{for all } \boldsymbol{x} \in D.
        \label{eq:sound_bound}
    \end{align}
\end{definition}

\begin{theorem}[Lower Bound]\label{thm:soundness}
    Let $B(\boldsymbol{x})$ be a lower-bounding function for $V_{\varphi}(\boldsymbol{x})$. 
    For any $\seq{x}\in D$, the value $B(\boldsymbol{x})$ provides a \emph{lower bound} for the maximum satisfaction probability $V_{\varphi}^*$. 
\end{theorem}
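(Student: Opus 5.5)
The plan is a two-step chain of inequalities: first apply the defining property of a lower-bounding function at the chosen point, then use the fact that $V_\varphi^*$ dominates $V_\varphi$ everywhere on $D$. No earlier results are needed beyond \cref{def:lbf} and the definition of $V_\varphi^*$.

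Fix an arbitrary $\boldsymbol{x}\in D$. By \cref{eq:sound_bound} in \cref{def:lbf}, we have $B(\boldsymbol{x})\le V_\varphi(\boldsymbol{x})$.

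Next, since $\boldsymbol{x}\in D$ is feasible for \cref{eq:max-problem}, $V_\varphi(\boldsymbol{x})\le \sup_{\boldsymbol{x}'\in D} V_\varphi(\boldsymbol{x}')$. Chaining the two inequalities gives
\begin{equation*}
B(\boldsymbol{x})\le V_\varphi(\boldsymbol{x})\le \sup_{\boldsymbol{x}'\in D}V_\varphi(\boldsymbol{x}') = V_\varphi^*.
\end{equation*}

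The only delicate point is the final equality, i.e.\ whether $V_\varphi^*=V_\varphi(\boldsymbol{x}^*)$ is actually the supremum. The paper defines $V_\varphi^*$ through an optimal $\boldsymbol{x}^*$ "if exists". I would handle this in one of two ways. The first is to interpret $V_\varphi^*$ as $\sup_{D}V_\varphi$, which is always defined because $V_\varphi$ takes values in $[0,1]$; then the chain holds with no further assumptions. The second is to remark that, whenever a maximizer $\boldsymbol{x}^*$ exists, the supremum equals $V_\varphi(\boldsymbol{x}^*)$ by the definition of a maximizer, so the two readings coincide. I do not need continuity of $V_\varphi$ or the compactness in \cref{assump:domain}: the inequality $V_\varphi(\boldsymbol{x})\le V_\varphi(\boldsymbol{x}^*)$ is simply the defining property of $\boldsymbol{x}^*$. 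Since $\boldsymbol{x}$ was arbitrary, this completes the argument.
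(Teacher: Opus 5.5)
Your argument is correct and is essentially the paper's own proof: both are the one-line chain of inequalities obtained directly from the definition of a lower-bounding function and the definition of $V_\varphi^*$. Your pointwise ordering $B(\boldsymbol{x})\le V_\varphi(\boldsymbol{x})\le V_\varphi^*$ is slightly cleaner than the paper's detour through $\max_{\boldsymbol{x}\in D}B(\boldsymbol{x})$, which tacitly assumes that this maximum exists, and reading $V_\varphi^*$ as a supremum removes any reliance on a maximizer.
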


The proof of \cref{thm:soundness} follows directly from the definition of $V_{\varphi}^*$ and \cref{def:lbf}:
\begin{align}
    V_{\varphi}^* := \max_{\boldsymbol{x} \in D} V_{\varphi}(\boldsymbol{x})
    \geq \max_{\boldsymbol{x} \in D} B(\boldsymbol{x}) 
            \geq B(\boldsymbol{x}) \quad \text{for all } \boldsymbol{x} \in D.
\end{align}
Essentially, the function $B(\boldsymbol{x})$ serves as a conservative under-approximation of $V_{\varphi}(\boldsymbol{x})$. Consequently, our objective shifts to constructing a function $B(\seq{x})$ that is \emph{computationally tractable}, thereby avoiding the direct evaluation of the complex original function $V_{\varphi}(\boldsymbol{x})$.

\paragraph{Construction via Interval Arithmetic.}
We now detail the construction of $B(\seq{x})$ and the computation of $B(\seq{x}^+)$ for a fixed $\seq{x}^+ \in D$ using interval arithmetic. 
Recall from \cref{eq:sat-prob} that $V_\varphi(\seq{x}^+)$ is defined as the integral of the density function $f_\seq{y}$ over the satisfaction region $S(\seq{x}^+)$. To approximate this integral, we employ a rigorous bounding strategy:
\begin{enumerate}
    \item \textbf{Domain Truncation:} We first identify a sufficiently large bounded region $D_\seq{y}$ within the support of $\seq{y}$. For the purpose of computing a certified lower bound, the probability mass outside $D_\seq{y}$ is conservatively treated as zero.
    \item \textbf{Set Approximation:} Using standard interval arithmetic techniques, such as \emph{Interval Constraint Propagation (ICP)} and \emph{recursive subdivision}, we compute two finite collections of disjoint, axis-aligned boxes within $D_\seq{y}$:
    \begin{itemize}
        \item \emph{Inner boxes} $\{I_k\}_{k=1}^{K}$: A collection $\mathcal{I}$ where each box is proven to be contained within the satisfaction set, i.e., $\bigcup_{k=1}^K I_k \subseteq S(\seq{x}^+)$.
        \item \emph{Boundary boxes} $\{O_l\}_{l=1}^{L}$: A collection $\mathcal{O}$ where each box intersects the boundary of $S(\seq{x}^+)$ but is not contained within $S(\seq{x}^+)$ , i.e., $O_l \cap S(\boldsymbol{x}^+) \neq \emptyset$ and $O_l \not \subseteq S(\boldsymbol{x}^+)$. 
    \end{itemize}
    This partitioning yields the following formal enclosure:
    \begin{equation}
        \bigcup_{k=1}^K I_k \quad \subseteq \quad S(\seq{x}^+) \cap D_\seq{y} \quad \subseteq \quad \left( \bigcup_{k=1}^{K} I_k \cup \bigcup_{l=1}^{L} O_l \right).
    \end{equation}
    According to \cref{remark:ineq}, we can keep enlarging $D_{\seq{y}}$ and refining the boundary boxes until the volume of $\bigcup_{l=1}^{L} O_l$ is below a user-specified precision threshold $\epsilon_0 > 0$.
    \item \textbf{Mass Integration:} The lower bound $B(\seq{x}^+;\epsilon_0)$ is then obtained by summing the probability masses of the inner boxes $\mathcal{I}$, i.e.,
    \begin{equation}\label{eq:B}
        B(\boldsymbol{x}^+; \epsilon_0):= \mathbb{P}_{\boldsymbol{y}}\left( \boldsymbol{y} \in \bigcup_{k=1}^{K} I_k \right)
    \end{equation}
\end{enumerate}

Since the inner boxes are guaranteed subsets of $S(\boldsymbol{x}^+)$, the cumulative probability mass within their union provides a certified lower bound for $V_{\varphi}(\boldsymbol{x}^+)$:
\begin{equation}\label{eq:v-lower}
    B(\boldsymbol{x}^+; \epsilon_0) \leq V_{\varphi}(\boldsymbol{x}^+).
\end{equation}

Conversely, by accounting for the boundary boxes $O_l$ and the precision threshold $\epsilon_0$, an upper bound for $V_\varphi(\seq{x}^+)$ can be derived:
\begin{equation}\label{eq:v-upper}
    V_{\varphi}(\boldsymbol{x}^+) \leq B(\boldsymbol{x}^+; \epsilon_0) + \sum_{l=1}^{L} \int_{O_l} f_{\boldsymbol{y}}(\boldsymbol{y}) \, d\boldsymbol{y} \leq B(\boldsymbol{x}^+; \epsilon_0) + \epsilon_0.
\end{equation}

It follows from \cref{eq:v-lower,eq:v-upper} that interval arithmetic allows for the computation of an arbitrarily tight under-approximation of $V_{\varphi}(\seq{x}^+)$. We emphasize that, while the lower bound $B(\seq{x}^+; \epsilon_0)$ is a valid lower bound for the global maximum $V_{\varphi}^*$ (per \cref{thm:soundness}), the upper bound in \cref{eq:v-upper} applies only to the local point $\seq{x}^+$. It serves primarily as a convergence criterion to quantify the tightness of the approximation at that specific point.

The advantage of the methodology presented in this section lies in providing \emph{separation of concerns}: it decouples the \emph{search for optimal parameters} (the optimization process yielding $\seq{x}^+$) from the \emph{computation of a certified lower bound} (the evaluation of $B(\seq{x}^+; \epsilon_0)$). During the optimization phase, one may employ any heuristic or stochastic method; regardless of the search strategy used, $B(\seq{x}^+; \epsilon_0)$ remains a mathematically sound lower bound for $V_\varphi^*$. This modularity justifies our two-stage algorithmic framework, developed in the subsequent sections, which leverages the efficiency of stochastic oracles alongside the rigor of interval arithmetic.

\section{Improving Lower Bounds by Oracle-based Stochastic Gradient Descent}\label{section 4}

Having established the technique for computing a rigorous lower bound of $V_\varphi(\seq{x}^+)$ for a fixed point $\seq{x}^+ \in D$ in \cref{section 3}, we now address the primary optimization task:
\begin{equation*}
    \text{To find better } \seq{x}^+\in D \text{ with larger } V_{\varphi}(\seq{x}^+).
\end{equation*}

In \cref{subsec:0-1loss}, we first reformulate this problem into an equivalent \emph{expected 0-1 loss} minimization problem and establish the theoretical foundations for applying gradient-based methods.
In \cref{subsec:stochastic_oracles}, we introduce \emph{stochastic oracles}. 
These oracles provide probabilistic approximations of the function and gradient values via sampling, allowing for the application of a stochastic gradient descent framework  from \cite{jin2021high}.
Finally, in \cref{subsec:solving_procedure}, we present our main algorithm, which integrates oracle-based stochastic gradient descent with formal interval-based certification.



\subsection{Expected 0-1 Loss Minimization}
\label{subsec:0-1loss}
We first reformulate the optimization task \cref{eq:max-problem} into a form more amenable to stochastic search techniques. Consider the negation of $\varphi$:
\begin{equation}
    \neg \varphi(\boldsymbol{x}, \boldsymbol{y}) := \bigvee_{i=1}^k \bigwedge_{j=1}^l \left(P_{i,j} (\boldsymbol{x}, \boldsymbol{y}) < 0 \right).
\end{equation}
Recall \cref{remark:ineq} that the strict inequalities can be treated as non-strict inequalities.
We define the indicator function $\boldsymbol{1}_{\neg \varphi}: D \times \mathbb{R}^m \to \{0, 1\}$ as:
\begin{align}
    \boldsymbol{1}_{\neg \varphi}(\boldsymbol{x}, \boldsymbol{y}) =
    \begin{cases}
    1, & \text{if } \neg \varphi(\boldsymbol{x}, \boldsymbol{y}) \text{ is satisfied}, \\
    0, & \text{otherwise}.
    \end{cases}
\end{align}
For any $\boldsymbol{x} \in D$, the probability of violating the specification $\varphi$ is the expected value of this indicator function: $\mathbb{E}_{\boldsymbol{y}}[\boldsymbol{1}_{\neg \varphi}(\boldsymbol{x}, \boldsymbol{y})] = 1 - V_{\varphi}(\boldsymbol{x})$. Consequently, the original maximization problem \eqref{eq:max-problem} is equivalent to the following minimization problem:
\begin{equation}\label{eq:min-problem}
    \min_{\boldsymbol{x} \in D}\quad W_\varphi(\seq{x}) := \mathbb{E}_{\boldsymbol{y}}\left[ \boldsymbol{1}_{\neg \varphi}(\boldsymbol{x}, \boldsymbol{y}) \right].
\end{equation}
The objective $W_\varphi(\seq{x})$ is an \emph{expected $0$-$1$ loss}, representing the expected violation probability over the deterministic parameter $\boldsymbol{x}$.
Since the $0$-$1$ loss is a standard objective in stochastic optimization~\cite[Sec.~2.4]{berger85book},
this formulation directly connects our problem to a canonical stochastic optimization framework.

However, unlike standard stochastic optimization tasks, our objective function $W_\varphi(\seq{x})$ involves integration over an implicitly defined semialgebraic region $S(\seq{x})$. Consequently, the boundary between $0$ and $1$ loss is governed by some complex polynomial constraints. This raises a fundamental theoretical concern: whether $W_\varphi(\seq{x})$ possesses sufficient regularity --- specifically, the existence of its gradient --- to justify the application of stochastic gradient descent techniques. 
We address this concern with the following theorem:

\begin{theorem}\label{thm:W-cont}
    Under \cref{assump:measure}, $W_\varphi(\seq{x})$ is absolutely continuous over the domain~$D$.
\end{theorem}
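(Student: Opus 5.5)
Our plan is to reduce absolute continuity of $W_\varphi$ to uniform control of its oscillation on small boxes, using only the structure of $\varphi$ and \cref{assump:measure}. Since $W_\varphi = 1 - V_\varphi$, it suffices to treat $V_\varphi(\seq{x}) = \int \ind_{S(\seq{x})}(\seq{y}) f_\seq{y}(\seq{y})\,\dd\seq{y}$. For $\seq{x},\seq{x}'\in D$ we have
\begin{equation*}
|V_\varphi(\seq{x})-V_\varphi(\seq{x}')| \le \mathbb{P}_\seq{y}\bigl(\seq{y}\in S(\seq{x})\,\triangle\, S(\seq{x}')\bigr),
\end{equation*}
and the symmetric difference is contained in the union over all atoms $(i,j)$ of the sets of $\seq{y}$ for which $P_{i,j}(\cdot,\seq{y})$ changes sign between $\seq{x}$ and $\seq{x}'$. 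Hence the oscillation of $V_\varphi$ over a box $R\subseteq D$ is at most $\sum_{i,j}\mathbb{P}_\seq{y}(\seq{y}\in T_{i,j}(R))$, where $T_{i,j}(R):=\{\seq{y}\mid \min_{\seq{x}\in R}P_{i,j}(\seq{x},\seq{y})<0\le \max_{\seq{x}\in R}P_{i,j}(\seq{x},\seq{y})\}$.

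The main step is to show that $\mathbb{P}_\seq{y}(T_{i,j}(R))\to 0$ uniformly as $\operatorname{diam}(R)\to 0$. We proceed in three moves. First, by absolute continuity of the law of $\seq{y}$ (\cref{assump:measure}), choose a compact ball $K\subset\Real^m$ with $\mathbb{P}(\seq{y}\notin K)<\eta$. Second, on the compact set $D\times K$ (using \cref{assump:domain}), $P_{i,j}$ is Lipschitz with some constant $L$. Thus $T_{i,j}(R)\cap K\subseteq\{\seq{y}\in K \mid |P_{i,j}(\seq{x}_R,\seq{y})|\le L\operatorname{diam}(R)\}$ for any fixed center $\seq{x}_R$ of $R$. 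Third, we show that $g(\delta):=\sup_{\seq{x}\in D}\mathbb{P}(|P_{i,j}(\seq{x},\seq{y})|\le\delta)\to 0$ as $\delta\to 0$. For each fixed $\seq{x}$ this probability decreases to $\mathbb{P}(P_{i,j}(\seq{x},\seq{y})=0)=0$ by \cref{remark:ineq}. The function $\seq{x}\mapsto\mathbb{P}(|P_{i,j}(\seq{x},\seq{y})|\le\delta)$ is upper semicontinuous, being the probability of a closed condition that depends continuously on $\seq{x}$. By Dini's theorem on compact $D$, a decreasing sequence of u.s.c.\ functions converging pointwise to $0$ converges uniformly. This uniformity is the step we expect to be the main obstacle, and it is exactly where the nonconstancy in $\seq{y}$ of each $P_{i,j}(\seq{x},\cdot)$ is used.

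Combining these gives $\operatorname{osc}_R(W_\varphi)\le kl\,(\eta+g(L\operatorname{diam} R))=:\omega(\operatorname{diam}R)$, so $W_\varphi$ is uniformly continuous on $D$. To reach the box-sum form of absolute continuity in the paper's definition, we still need to bound $\sum_i \operatorname{osc}_{R_i}$ by something that vanishes as $\sum_i\operatorname{vol}(R_i)\to 0$. We would sharpen the third step into an integrated estimate. Using Fubini over $\seq{x}$, one bounds $\int_D \mathbb{P}(|P_{i,j}(\seq{x},\seq{y})|\le\delta)\,\dd\seq{x}$. One then compares the oscillation over each box with averages, via the inclusion of $T_{i,j}(R)$ into a $\delta$-tube around the zero set evaluated at the box center. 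The aim is a bound of the form $\sum_i\operatorname{osc}_{R_i}\le C\sum_i\operatorname{vol}(R_i)^{\alpha}$ or $\le \omega'(\sum_i\operatorname{vol}R_i)$ on $D$. If this finer estimate resists, we would fall back to proving that $W_\varphi$ is uniformly continuous (with the modulus $\omega$ above), which is the regularity actually needed to justify the stochastic oracles in \cref{subsec:stochastic_oracles}. We would then note that the box-sum control follows whenever the density $f_\seq{y}$ is bounded, since in that case the tube probability is $O(\delta)$ by the coarea bound for nonconstant polynomials.
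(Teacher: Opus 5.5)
Your first three steps are sound and show that $W_\varphi$ is uniformly continuous on $D$. Note that the Lipschitz and Dini steps use compactness of $D$, i.e.\ Assumption~\ref{assump:domain}, which the theorem does not assume.

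\textbf{The gap.} Your estimate bounds $\operatorname{osc}_R(W_\varphi)$ by a function of $\operatorname{diam}(R)$. The paper's definition instead requires $\sum_i\operatorname{osc}_{R_i}$ to be small whenever $\sum_i\operatorname{vol}(R_i)$ is small, and the ``sharpening'' meant to bridge this is only gestured at.

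For $n\ge 2$ it cannot be supplied at all. Take $\varphi=(x_1-y\ge 0)$ with $y$ uniform on $[0,1]$ and $D=[0,1]^n$, so $W_\varphi=1-x_1$. The box $[0,1]\times[0,\eta]^{n-1}$ has volume $\eta^{n-1}\to 0$ but oscillation $1$. Hence no bound of the form $\sum_i\operatorname{osc}_{R_i}\le\omega'(\sum_i\operatorname{vol}R_i)$ exists, and the statement itself fails for $n\ge 2$ under the paper's definition.

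Your fallback is also unsound, for two reasons. First, a bounded density does not make tube probabilities $O(\delta)$ near degenerate zeros. With $y$ uniform on $(-1,1)$ and $\varphi=(x-y^2\ge 0)$, one gets $\mathbb{P}(|P(0,y)|\le\delta)=\sqrt{\delta}$ and $V_\varphi(x)=\sqrt{x}$, which is not Lipschitz. Second, even a Lipschitz $W_\varphi$ fails the box-sum condition when $n\ge 2$, as the example above shows.

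\textbf{The case $n=1$ and the missing idea.} Here the statement is meaningful, but summing per-box tube bounds over many small intervals discards the essential information. The missing ingredient is the one driving the paper's proof.

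The paper proceeds as follows. It reduces by inclusion--exclusion to $G(\boldsymbol{x})=\mathbb{P}(\max_s Q_s(\boldsymbol{x},\boldsymbol{y})<0)$ and restricts to lines $t\mapsto\boldsymbol{x}_0+t\boldsymbol{e}$. It then uses that, for almost every $\boldsymbol{y}$, the function $t\mapsto\max_s Q_s(\boldsymbol{x}_0+t\boldsymbol{e},\boldsymbol{y})$ has at most $N$ real zeros. So each $\boldsymbol{y}$ contributes to the variation of $h$ at most $N$ times, giving $\int|h'|\,dt\le N$.

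Your symmetric-difference reduction to single atoms is cleaner than inclusion--exclusion and combines well with this counting. For disjoint intervals $R_1,\dots,R_p$ one has
$\sum_{\ell=1}^{p}\mathbf{1}_{T_{i,j}(R_\ell)}(\boldsymbol{y})\le\deg_x P_{i,j}\cdot\mathbf{1}\{P_{i,j}(\cdot,\boldsymbol{y})\text{ has a zero in }\bigcup_\ell \overline{R_\ell}\}$.
This is a global count that no per-box estimate captures.

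To conclude, you would still need the probability that a zero lands in a set of small total length to be small. Equivalently, the root-location measure $E\mapsto\mathbb{E}[\#\{x\in E: P_{i,j}(x,\boldsymbol{y})=0\}]$ must be absolutely continuous with respect to Lebesgue measure. This is where non-constancy in $\boldsymbol{y}$ has to be used a second time.

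The paper leaves this point implicit as well. Its bound $\int|h'|\le N$ on the distributional derivative gives bounded variation, not absolute continuity. Its appeal to $W^{1,1}_{\mathrm{loc}}$ does not yield the box-sum property for $n\ge 2$.
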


The proof is technical and is deferred to \cref{app:abs_cont_proof}. Crucially, this theorem ensures that $W_\varphi(\boldsymbol{x})$ is differentiable almost everywhere over $D$, even if $D$ is unbounded (i.e., without requiring \cref{assump:domain}). 
This regularity provides the formal justification for applying gradient-based stochastic methods, as it guarantees that the search landscape is sufficiently well-behaved.

\subsection{Oracle-based Stochastic Gradient Descent}
\label{subsec:stochastic_oracles}


\paragraph{Stochastic Oracles.} 
While the absolute continuity of $W_\varphi(\seq{x})$ guarantees the existence of gradients almost everywhere, neither the objective value $W_{\varphi}(\seq{x})$ nor its gradient $\nabla W_\varphi (\seq{x})$ can be directly computed. 
To circumvent this, we adopt the framework of stochastic optimization, assuming that our algorithm has black-box access to two specific types of \emph{stochastic oracles}. These oracles provide probabilistic estimates of the objective and its first-order information, allowing the optimization process to proceed using empirical observations rather than exact values.

\begin{definition}[Stochastic Zeroth-order Oracle]
\label{def:szo}
Given a point $\seq{x}\in D$, a \emph{stochastic zeroth-order oracle} computes $\tilde{W}_\varphi(\boldsymbol{x},\boldsymbol{\xi})$ as an estimate of the function value $W_\varphi(\seq{x})$, where the function $\tilde{W}_\varphi$ is known and $\boldsymbol{\xi}$ is a vector of random variables, such that the absolute error $e(\boldsymbol{x,\xi}) := \vert \tilde{W}_\varphi(\boldsymbol{x},\boldsymbol{\xi})-W_\varphi(\boldsymbol{x})\vert$ satisfies the following conditions:
\begin{itemize}
    \item[(1)] There exists a constant $\varepsilon_W > 0$ such that $\Expect [e(\boldsymbol{x},\boldsymbol{\xi})] \leq \varepsilon_W$.
    \item[(2)] There exist constants $\nu>0,\ b>0$ such that 
    \begin{align*}
        \Expect [\exp\{\lambda(e(\boldsymbol{x},\boldsymbol{\xi})-\Expect[e(\boldsymbol{x},\boldsymbol{\xi})])\}] \leq \exp\left(\frac{\lambda^2\nu^2}{2}\right),\ \forall \lambda \in \left[0,\frac{1}{b}\right].
    \end{align*}
\end{itemize}    
\end{definition}
These conditions ensure the oracle is robust and well-behaved. The first condition guarantees that the estimate is correct on average up to a small tolerance $\varepsilon_W$. The second condition ensures that the probability of returning a wildly inaccurate result decays rapidly. Specifically, it implies that the error $e(\seq{x}, \seq{\xi})$ follows a ``one-sided'' sub-exponential distribution.
The constants $\varepsilon_W$, $\nu$, and $b$ are intrinsic to the oracle.



\begin{definition}[Stochastic First-order Oracle]
Given a point $\seq{x}\in D$ and a constant $\alpha>0$, a \emph{stochastic first-order oracle} computes $\tilde{g}(\boldsymbol{x},\boldsymbol{\xi}')$ to be an estimation of the gradient $\nabla W_\varphi(\boldsymbol{x})$, where the function $\tilde{g}$ is known and $\boldsymbol{\xi}'$ is a vector of random variables, such that the following condition holds for some constants $\varepsilon_g >0,\ \kappa >0,\ \delta \in (0,1)$:
\begin{align*}
    \Prob\left(\Vert \tilde{g}(\boldsymbol{x},\boldsymbol{\xi}')- \nabla W_\varphi(\boldsymbol{x})\Vert_2 \leq \max\{\varepsilon_g,\kappa\alpha\Vert \tilde{g}(\boldsymbol{x},\boldsymbol{\xi}')\Vert_2\}\right)\geq 1-\delta
\end{align*}    
\end{definition}

Intuitively, this condition ensures that the gradient estimate is reliable with high probability $1-\delta$. The error bound is defined by two regimes: an \emph{additive regime} controlled by the precision constant $\varepsilon_g$, and a \emph{relative regime} where the error scales with the magnitude of the estimate via $\kappa \alpha$. Essentially, this ensures the oracle provides a sufficiently accurate descent direction for the optimization algorithm.
The constants $\varepsilon_g$, $\kappa$, and $\delta$ are intrinsic to the oracle.

\paragraph{Implementing Oracles.}
To utilize the stochastic oracles defined above, we must provide concrete realizations through practical computational operations. These implementations are referred to as \emph{estimators}.
In the following, we first show how to construct the estimators, and then prove that they satisfy the requirements of stochastic oracles.

Given a candidate point $\boldsymbol{x}\in D$, we construct estimators for the function value $W_\varphi(\boldsymbol{x})$ and its gradient $\nabla W_\varphi(\boldsymbol{x})$ by drawing independent random samples from the distribution of $\seq{y}$:
\begin{itemize}
    \item The \emph{zeroth-order estimator} $\tilde{W}_\varphi$ estimates $W_\varphi(\boldsymbol{x})$ using a set $\mathscr{S}$ of i.i.d. samples $\{\boldsymbol{Y}_i\}$ drawn from the probability distribution of $\boldsymbol{y}$:
    \begin{align}
        \tilde{W}_\varphi(\boldsymbol{x}, \mathscr{S}) &:= \frac{1}{|\mathscr{S}|} \sum_{\boldsymbol{Y} \in \mathscr{S}} \boldsymbol{1}_{\neg \varphi}(\boldsymbol{x}, \boldsymbol{Y}).
        \label{eq:zeroth_order_oracle}
    \end{align}
    \item The \emph{first-order estimator} $\tilde{g}$ estimates the gradient $\nabla W_\varphi(\boldsymbol{x})$ via a \emph{smoothed finite-difference scheme}. It builds upon the zeroth-order estimator $\tilde{W}_\varphi$, a set $\mathscr{U} = \{\boldsymbol{u}_i\}$ of random direction vectors (e.g., from a standard normal distribution), and a smoothing radius $\sigma > 0$:
    \begin{align}
        \tilde{g}(\boldsymbol{x}, \mathscr{U}) &:= \frac{1}{|\mathscr{U}|} \sum_{\boldsymbol{u} \in \mathscr{U}} \frac{\tilde{W}_\varphi(\boldsymbol{x} + \sigma \boldsymbol{u}, \mathscr{S}') - \tilde{W}_\varphi(\boldsymbol{x}, \mathscr{S})}{\sigma} \, \boldsymbol{u},
        \label{eq:first_order_oracle}
    \end{align}
    where $\mathscr{S}$ and $\mathscr{S}'$ are independent sample sets for the zeroth-order estimator.
\end{itemize}

The following theorems establish that $\tilde{W}_\varphi$ and $\tilde{g}$ qualify as our desired stochastic zeroth- and first-order oracles, respectively.

\begin{theorem}\label{thm:zeroth_order_oracle}
    Let $\tilde{W}_\varphi(\boldsymbol{x}, \mathscr{S})$ be defined as in \eqref{eq:zeroth_order_oracle}, and let $N := |\mathscr{S}|$ denote the number of samples. 
    Then $\tilde{W}_\varphi(\boldsymbol{x}, \mathscr{S})$ is a stochastic zeroth-order oracle with $\varepsilon_W = 1/(2\sqrt{N})$, $\nu = b = 4e^2/\sqrt{N}$.
\end{theorem}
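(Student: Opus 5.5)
Fix $\seq{x}\in D$ and write $p:=W_\varphi(\seq{x})$ and $Z_i:=\boldsymbol{1}_{\neg\varphi}(\seq{x},\seq{Y}_i)$. The $Z_i$ are i.i.d.\ Bernoulli$(p)$, so $\tilde W_\varphi(\seq{x},\mathscr{S})=\bar Z:=\frac1N\sum_i Z_i$ and $e=|\bar Z-p|$. I will verify the two conditions of \cref{def:szo} separately.

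\textbf{Condition (1).} By Jensen (or Cauchy--Schwarz), $\Expect|\bar Z-p|\le\sqrt{\operatorname{Var}(\bar Z)}=\sqrt{p(1-p)/N}\le 1/(2\sqrt N)$, since $p(1-p)\le 1/4$. This gives $\varepsilon_W=1/(2\sqrt N)$.

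\textbf{Condition (2).} I need a one-sided moment generating function bound for $e-\Expect e$ with $\nu=b=4e^2/\sqrt N$. The plan is to pass through moments.

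First, show that $e$ is sub-Gaussian with scale $1/\sqrt N$. Hoeffding gives $\Prob(e\ge t)\le 2\exp(-2Nt^2)$. Integrating this tail yields
\[
\Expect e^q\le q\int_0^\infty t^{q-1}\,2e^{-2Nt^2}\,dt=q\,\Gamma(q/2)\,(2N)^{-q/2}.
\]
It is cleaner to work with the centered variable $X:=e-\Expect e$. By the triangle inequality in $L^q$, $\|X\|_q\le 2\|e\|_q$, so $\Expect|X|^q\le 2^q q\,\Gamma(q/2)(2N)^{-q/2}$.

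Next, expand the MGF. Since $\Expect X=0$,
\[
\Expect e^{\lambda X}\le 1+\sum_{q\ge2}\frac{\lambda^q\,\Expect|X|^q}{q!}.
\]
Use $\Gamma(q/2)\le (q/2)^{q/2}$ and $q!\ge (q/e)^q$. Then the $q$-th term is at most $\bigl(c\,\lambda e/\sqrt{N}\bigr)^q q\,q^{-q/2}$ for an absolute constant $c$, which is bounded by $(C\lambda/\sqrt N)^q$ with $C$ of order $e^2$. For $\lambda\le 1/b=\sqrt N/(4e^2)$ the ratio $C\lambda/\sqrt N\le 1/2$ (provided $C\le 2e^2$), so the geometric tail sums to at most $2(C\lambda/\sqrt N)^2\le \lambda^2\nu^2/2$. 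Hence $\Expect e^{\lambda X}\le 1+\lambda^2\nu^2/2\le\exp(\lambda^2\nu^2/2)$.

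\textbf{Main obstacle.} The delicate step is the constant bookkeeping in Condition (2): matching exactly $\nu=b=4e^2/\sqrt N$ requires choosing the Stirling-type bounds so that the accumulated factor (the $2^q$ from centering, the $e^q$ from $q!$, and the $\sqrt{q}$ growth) stays within $4e^2$. The fallback is to bound the $q$-th term crudely by $(2e\cdot\sqrt{e}\,\lambda/\sqrt N)^q$ and check that $2e^{3/2}\le 4e^2$, so there is slack. Alternatively, one can cite a standard equivalence between sub-Gaussian moments and sub-exponential MGF bounds with explicit constants, and then verify that those constants fit inside $4e^2$.
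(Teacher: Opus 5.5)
Your proof is correct, and it follows a different, more self-contained route than the paper's. The paper proves only the per-sample fact that the centered indicator $\boldsymbol{1}_{\neg\varphi}(\boldsymbol{x},\boldsymbol{Y})-p$ satisfies $\mathbb{E}[e^{\lambda Z}]\le e^{\lambda^2/8}$ (Hoeffding's lemma, via a tilted-distribution and Taylor argument). It then obtains both $\varepsilon_W=1/(2\sqrt N)$ and $\nu=b=4e^2/\sqrt N$ by citing Prop.~1 of Jin et al., so the origin of the constant $4e^2$ is left to that reference.

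You take a different path for each condition:
\begin{itemize}
\item For condition (1), you argue directly from Jensen and $p(1-p)\le 1/4$.
\item For condition (2), you use Hoeffding's inequality for the sample average, which packages the same Bernoulli MGF bound together with tensorization. You then integrate the tail into moment bounds and sum the MGF series by hand.
\end{itemize}
The paper's version is shorter and reuses a known result. Yours actually verifies the stated constants, and it makes transparent why they are uniform in $\boldsymbol{x}$: both Jensen and Hoeffding give bounds independent of $p$.

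Your constant bookkeeping is also less delicate than you fear. With $\Gamma(q/2)\le (q/2)^{q/2}$ and $q!\ge (q/e)^q$, the $2^q$ from centering cancels exactly against $(q/2)^{q/2}(2N)^{-q/2}$. The $q$-th term is then at most $q^{1-q/2}(e\lambda/\sqrt N)^q\le (e\lambda/\sqrt N)^q$, so $C=e$ suffices, with ample room against the required bound $8e^4\lambda^2/N$.

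One small correction to your fallback. The condition you derived is $C\le 2e^2$, not $C\le 4e^2$: at $C=4e^2$ the ratio reaches $1$ at $\lambda=1/b$ and the geometric series bound fails. Since $2e^{3/2}<2e^2$, the crude bound still works.
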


The proof of \cref{thm:zeroth_order_oracle} is provided in \cref{app:proof_zeroth_order}. Notably, this theorem strengthens the original result established in \cite[Prop.~1]{jin2021high} by eliminating the dependence of the parameters $\varepsilon_W$, $\nu$, and $b$ on the specific point $\seq{x}$. This uniformity across the domain $D$ is a key technical contribution, as it allows us to establish the global convergence result in  \cref{thm:single_run_convergence}. 
This is achieved by leveraging the property that  $\boldsymbol{1}_{\neg \varphi}(\boldsymbol{x}, \boldsymbol{Y}) - W_\varphi(\boldsymbol{x})$ is a centered, sub-exponential random variable. The next result for stochastic first-order oracles is directly taken from~\cite{jin2021high}.

%

\begin{theorem}[{\cite[Prop.~3]{jin2021high}}]\label{thm:first_order_oracle}
    Let $\tilde{g}(\boldsymbol{x}, \mathscr{U})$ be defined as in \cref{eq:first_order_oracle} and  $\tilde{W}_\varphi$ be a zeroth-order oracle as in \cref{def:szo}.
    Assume $\nabla W_\varphi(\boldsymbol{x})$ is $L$-Lipschitz continuous over $D$.  For a fixed $\boldsymbol{x} \in D$, constants $\alpha, \kappa, \sigma > 0$, $\delta \in (0,1)$, and defining $\varepsilon_g = 2\left(\sqrt{n}L\sigma + \sqrt{n}\varepsilon_W/\sigma\right)$, then the estimator $\tilde{g}(\boldsymbol{x}, \mathscr{U})$ is a stochastic first-order oracle when $|\mathscr{U}|$ is sufficiently large (with an explicit bound given in \cite{jin2021high}).
\end{theorem}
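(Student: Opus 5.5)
The plan is to reproduce, for our specific estimator, the argument behind \cite[Prop.~3]{jin2021high}. Throughout, $W_\varphi$ is replaced by its Gaussian smoothing
\[
W_\sigma(\boldsymbol{x}) := \mathbb{E}_{\boldsymbol{u}}[W_\varphi(\boldsymbol{x}+\sigma\boldsymbol{u})], \qquad \boldsymbol{u}\sim\mathcal{N}(0,I_n).
\]
The error $\tilde g(\boldsymbol{x},\mathscr U)-\nabla W_\varphi(\boldsymbol{x})$ will be split into three pieces, each bounded separately:
\[
\underbrace{\nabla W_\sigma(\boldsymbol{x})-\nabla W_\varphi(\boldsymbol{x})}_{\text{smoothing bias}}
\;+\;
\underbrace{\mathbb{E}_{\mathscr U,\mathscr S,\mathscr S'}[\tilde g]-\nabla W_\sigma(\boldsymbol{x})}_{\text{oracle bias}}
\;+\;
\underbrace{\tilde g-\mathbb{E}[\tilde g]}_{\text{sampling fluctuation}}.
\]

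\textbf{Smoothing bias.} Stein's identity gives $\nabla W_\sigma(\boldsymbol{x})=\mathbb{E}_{\boldsymbol{u}}\big[\tfrac{W_\varphi(\boldsymbol{x}+\sigma\boldsymbol{u})-W_\varphi(\boldsymbol{x})}{\sigma}\,\boldsymbol{u}\big]$. Combining this with the $L$-Lipschitz continuity of $\nabla W_\varphi$, the standard Nesterov--Spokoiny estimate bounds the first piece by a constant multiple of $\sqrt n L\sigma$.

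\textbf{Oracle bias.} Replacing the exact values $W_\varphi(\cdot)$ in the difference quotient by $\tilde W_\varphi$ changes each term by at most $(e(\boldsymbol{x}+\sigma\boldsymbol{u},\mathscr S')+e(\boldsymbol{x},\mathscr S))\|\boldsymbol{u}\|/\sigma$. Taking expectations, condition (1) of \cref{def:szo} (with $\varepsilon_W$ uniform in $\boldsymbol{x}$, as guaranteed by \cref{thm:zeroth_order_oracle}) together with $\mathbb{E}\|\boldsymbol{u}\|\le\sqrt n$ gives a bound of order $\sqrt n\,\varepsilon_W/\sigma$. With the factor $2$, the first two pieces together are at most $\varepsilon_g/2$.

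\textbf{Sampling fluctuation.} The summands of $\tilde g$ are i.i.d. given $\mathscr S$; if $\mathscr S$ is shared across directions, first condition on it. Each summand has the form $\tfrac{\Delta}{\sigma}\boldsymbol{u}$ with $|\Delta|\le 1$, because the $0$-$1$ estimates lie in $[0,1]$. Hence it is a sub-Gaussian vector with norm proxy of order $\sqrt n/\sigma$, and its second moment is also controlled. A vector Bernstein (or Chebyshev-type) bound then yields
\[
\|\tilde g-\mathbb{E}\tilde g\|\le \varepsilon_g/2 \quad\text{with probability at least } 1-\delta
\]
once $|\mathscr U|\gtrsim \tfrac{n}{\sigma^2\varepsilon_g^2}\log\tfrac1\delta$. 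On that event the total error is at most $\varepsilon_g\le\max\{\varepsilon_g,\kappa\alpha\|\tilde g\|\}$, which is exactly the first-order oracle condition. The relative regime is not even needed, which is why $\alpha,\kappa$ may be arbitrary.

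\textbf{Main obstacle.} The delicate step is the oracle-bias term. The errors at $\boldsymbol{x}+\sigma\boldsymbol{u}$ are correlated with $\boldsymbol{u}$, and $\boldsymbol{x}+\sigma\boldsymbol{u}$ may leave $D$. I would handle this by conditioning on $\boldsymbol{u}$ and using that the bound on $\mathbb{E}[e]$ holds at every point, which is exactly the uniformity supplied by \cref{thm:zeroth_order_oracle}. I would also extend $W_\varphi$ outside $D$ by the same formula, noting that \cref{thm:W-cont} does not need compactness. Then Cauchy--Schwarz separates $\|\boldsymbol{u}\|$ from the error. The remaining constants follow \cite{jin2021high} verbatim.
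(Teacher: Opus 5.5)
The step that fails is the smoothing bias, and your proposed patch does not rescue it. Because $\boldsymbol{u}\sim\mathcal{N}(0,I_n)$ has full support, the bound $\|\nabla W_\sigma(\boldsymbol{x})-\nabla W_\varphi(\boldsymbol{x})\|\le\sqrt{n}L\sigma$ needs $\nabla W_\varphi$ to exist and be $L$-Lipschitz at every point $\boldsymbol{x}+\sigma\boldsymbol{u}$, i.e.\ on all of $\mathbb{R}^n$. The hypothesis gives this only over $D$. Extending $W_\varphi$ by its defining formula and invoking \cref{thm:W-cont} gives a well-defined, a.e.\ differentiable function outside $D$. It gives no Lipschitz gradient there, and nothing controls the directions with $\boldsymbol{x}+\sigma\boldsymbol{u}\notin D$.

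No argument can close this, because the conclusion is false under the local hypothesis. Take $n=m=1$, $y\sim\mathcal{U}(-1,1)$ and $\varphi:=(y^2-x^2\ge 0)$, so $W_\varphi(z)=\min\{|z|,1\}$. Let $D=[-\tfrac12,-\tfrac1{20}]$; there $\nabla W_\varphi\equiv -1$, which is $L$-Lipschitz for every $L>0$. At $x=-\tfrac1{10}$ with $\sigma=1$ you get $W_\sigma'(x)=\mathbb{P}(0.1<u<1.1)-\mathbb{P}(-0.9<u<0.1)\approx -0.03$. Since the indicator estimator is unbiased, $\tilde{g}(x,\mathscr{U})$ concentrates at this value as $|\mathscr{U}|\to\infty$ (with fresh samples per direction, as in your argument). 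So $|\tilde{g}-\nabla W_\varphi(x)|\approx 0.97$. Now take $L=0.01$, $N=10^4$ (so $\varepsilon_W=0.005$, $\varepsilon_g=0.03$) and $\kappa\alpha=1$. The allowed error $\max\{\varepsilon_g,\kappa\alpha|\tilde{g}|\}$ is then about $0.03$, and the oracle property fails for every $\delta\in(0,1)$.

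The paper gives no proof of its own here; it imports \cite[Prop.~3]{jin2021high}, whose standing assumption is $L$-smoothness on the whole space. Your proof must likewise assume $\nabla W_\varphi$ is $L$-Lipschitz on $\mathbb{R}^n$, rather than claim to obtain what is needed from the extension.

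Under that global assumption your decomposition works, but the constants as written do not fit the fixed value of $\varepsilon_g$.

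First, the Nesterov--Spokoiny estimate is $\tfrac{\sigma L}{2}(n+3)^{3/2}$, not a multiple of $\sqrt{n}L\sigma$. The bound $\sqrt{n}L\sigma$ comes from $\nabla W_\sigma(\boldsymbol{x})=\mathbb{E}[\nabla W_\varphi(\boldsymbol{x}+\sigma\boldsymbol{u})]$ together with $\mathbb{E}\|\boldsymbol{u}\|\le\sqrt{n}$.

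Second, charging both errors $e(\boldsymbol{x}+\sigma\boldsymbol{u},\mathscr{S}')+e(\boldsymbol{x},\mathscr{S})$ gives $2\sqrt{n}\varepsilon_W/\sigma$. Your two bias pieces then total $\sqrt{n}L\sigma+2\sqrt{n}\varepsilon_W/\sigma>\varepsilon_g/2$. This is repairable, since the slack $\sqrt{n}L\sigma$ remains for the fluctuation. The clean fix is that $\mathbb{E}[\tilde{W}_\varphi(\boldsymbol{x},\mathscr{S})\,\boldsymbol{u}]=0$, because $\mathscr{S}$ is independent of $\boldsymbol{u}$ and $\mathbb{E}\boldsymbol{u}=0$. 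Only the error at $\boldsymbol{x}+\sigma\boldsymbol{u}$ survives, and conditioning on $\boldsymbol{u}$ bounds it by exactly $\sqrt{n}\varepsilon_W/\sigma$. Cauchy--Schwarz is unnecessary there; it would need a second moment of $e$, which condition (1) of \cref{def:szo} does not supply. The same identity shows that conditioning on a shared $\mathscr{S}$ does not move the mean.

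Third, your i.i.d.\ claim needs a fresh $\mathscr{S}'$ for each direction. If a single $\mathscr{S}'$ is shared, as \cref{eq:first_order_oracle} literally reads, then $\mathbb{E}[\tilde{g}\mid\mathscr{S}']-\mathbb{E}[\tilde{g}]$ is unaffected by $|\mathscr{U}|$. State that independence explicitly.

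With these repairs, $|\Delta|\le 1$ and Chebyshev give $\|\tilde{g}-\mathbb{E}\tilde{g}\|\le\varepsilon_g/2$ with probability at least $1-\delta$ once $|\mathscr{U}|\ge 4n/(\sigma^2\varepsilon_g^2\delta)$. You are right that the relative regime is never needed.
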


\paragraph{The ALOE Algorithm.}
Equipped with the stochastic oracles described above, we employ the \emph{Adaptive Learning-rate Oracle-based Evolution} (ALOE) algorithm~\cite{jin2021high} to solve the minimization problem in~\eqref{eq:min-problem}. ALOE is a stochastic gradient descent algorithm based on the two types of stochastic oracles we have defined.
A key feature of ALOE is its adaptive step-size mechanism: it dynamically adjusts the learning rate based on periodic validation checking  with the zeroth-order oracle. 
The core procedure is summarized in Algorithm \ref{alg:ALOE}.


\begin{algorithm2e}[t]
\caption{ALOE Procedure for Solving \eqref{eq:min-problem}}\label{alg:ALOE}
\SetKwInOut{Input}{Input}
\SetKwInOut{Param}{Param}
\SetKwInOut{Output}{Output}
\Input{Initial point $\boldsymbol{x}_0$, stochastic oracles $\tilde{W}_\varphi$ and $\tilde{g}$}
\Output{Candidate solution $\boldsymbol{x}^+$}
\Param{Oracle parameter $\varepsilon_W$, max step size $\alpha_{\max}>0$, initial step size $\alpha_0 \in (0, \alpha_{\max})$, max iterations $K_{\max}$, constants $\theta, \gamma \in (0,1)$}
\For{$k \gets 0$ \KwTo $K_{\max}$}{
    \algocomment{Compute gradient approximation.}
    Compute the direction $\boldsymbol{g}_k \leftarrow \tilde{g}(\boldsymbol{x}_k, \mathscr{U}_k)$ using stochastic first-order oracles\;
    propose the next point $\boldsymbol{x}_k^+ \leftarrow \boldsymbol{x}_k - \alpha_k \boldsymbol{g}_k$\;
    \algocomment{Check sufficient decrease.}
    Compute $\tilde{W}_\varphi(\boldsymbol{x}_k^+, \mathscr{S}_k^+)$, $\tilde{W}_\varphi(\boldsymbol{x}_k, \mathscr{S}_k)$ using stochastic zeroth-order oracles\;
    Check if $\boldsymbol{x}_k^+ \in D$ and a stop criterion (the Armijo condition)
    $$\tilde{W}_\varphi(\boldsymbol{x}_k^+, \mathscr{S}_k^+) \leq \tilde{W}_\varphi(\boldsymbol{x}_k, \mathscr{S}_k) - \alpha_k \theta \|\boldsymbol{g}_k\|_2^2 + 2\varepsilon_W;$$\\
    \algocomment{Accept or reject.}
    \uIf{successful}{
        Accept and increase step size: $\boldsymbol{x}_{k+1} \leftarrow \boldsymbol{x}_k^+$, $\alpha_{k+1} \leftarrow \min\{\alpha_{\max}, \gamma^{-1} \alpha_k\}$\;
    }
    \Else{
        Reject and decrease step size: $\boldsymbol{x}_{k+1} \leftarrow \boldsymbol{x}_k$, $\alpha_{k+1} \leftarrow \gamma \alpha_k$\;
    }
}
\Return $\boldsymbol{x}^+\gets \boldsymbol{x}_k$\;
\end{algorithm2e}

\subsection{Oracle-based Lower-Bounding Algorithm}
\label{subsec:solving_procedure}

Building upon the two core components developed in the previous sections: oracle-based stochastic gradient descent for identifying candidate points (\cref{subsec:stochastic_oracles}) and interval arithmetic for computing certified lower bounds (\cref{section 3}), 
we are now ready to integrate  them into a unified algorithmic procedure. This framework aims to compute a rigorous lower bound for the maximum satisfaction probability $V_\varphi^*$ defined in \cref{eq:max-problem}.


This integrated process is detailed in Algorithm \ref{alg:main_procedure}. The core strategy begins by sampling $M$ initial points within the parameter domain to ensure broad coverage of the search space. For each sampled point, we perform oracle-based stochastic gradient descent using the ALOE algorithm (\cref{alg:ALOE}), which leverages zeroth- and first-order stochastic oracles to iteratively refine the parameter candidate. Upon the termination of ALOE at a candidate point $\seq{x}^+$, we apply the interval arithmetic techniques established in \cref{section 3} to compute a mathematically rigorous lower bound $B(\seq{x}^+;\epsilon_0)$. By retaining the maximum lower bound identified across all $M$ trials, we construct a sound and increasingly tight certified under-approximation for the maximum satisfaction probability $V^*$ of the target problem.

\begin{algorithm2e}[t]
\caption{Oracle-based Lower-Bounding Algorithm for Solving \eqref{eq:max-problem}}\label{alg:main_procedure}
\SetKwInOut{Input}{Input}
\SetKwInOut{Output}{Output}
\SetKwInOut{Param}{Param}
\Input{Stochastic constraint $\varphi(\boldsymbol{x}, \boldsymbol{y})$, the parameter domain $D \subseteq \mathbb{R}^n$.
}
\Param{the maximum number of trials $M$ and precision threshold $\epsilon_0 > 0$\;}
\Output{Candidate solution $\boldsymbol{x}^+$ with a certified lower bound $l$}

\algocomment{Initialize best solution and bound found so far.}
$\boldsymbol{x}^+ \leftarrow \text{NaN}$, $l \leftarrow 0$\;
\For{$m \gets 1$ \KwTo $M$}{
    Randomly sample an initial point $\boldsymbol{x}_{0,m} \in D$\;
    \algocomment{Improve the candidate by using stochastic oracles (\ref{alg:ALOE}).}
    $\boldsymbol{x}^+_m \leftarrow \text{ALOE}(\boldsymbol{x}_{0,m})$\;
    \algocomment{Compute the lower bound via interval arithmetic (\cref{section 3})}
    $l_+ \leftarrow B(\boldsymbol{x}^+_m; \epsilon_0)$\;
    \algocomment{Record the better solution.}
    \uIf{$l_+ > l$}{
        $\boldsymbol{x}^+ \leftarrow \boldsymbol{x}^+_m$, $l \leftarrow l_+$\;
    }
}
\Return $\boldsymbol{x}^+, l$\;
\end{algorithm2e}

\section{Convergence and Complexity Analysis}\label{section 5}


In this section, we analyze the convergence rate of \cref{alg:main_procedure} and establish high-probability tail bounds on its iteration complexity (i.e., the number of iterations needed to reach a desired precision) under the following assumptions:

\begin{assumption} \label{assump:analyze}
We assume the following conditions hold for \eqref{eq:min-problem}:
\begin{enumerate}
    \item The domain $D \subset \mathbb{R}^n$ is compact with a finite diameter $\operatorname{diam}(D) = d$, i.e., $\sup\{\|\seq{x}_1-\seq{x}_2\|\}\le d$.
    \item The objective function $W_\varphi(\boldsymbol{x})$ attains its minimum $W_\varphi^* := 1 - V_\varphi^*$ at a point $\boldsymbol{x}^* \in D$.
    \item The gradient $\nabla W_\varphi(\boldsymbol{x})$ is $L$-Lipschitz continuous within a \emph{non-empty open neighborhood} $U \subset D$ of $\boldsymbol{x}^*$.
\end{enumerate}    
\end{assumption}

The first two conditions are standard in the stochastic optimization setting. The third assumption regarding local smoothness is particularly relevant for our problem: it has been shown to hold provided the specification formula $\varphi$ is non-pathological (e.g., satisfying certain regularity conditions on the boundaries of the semialgebraic sets)~\cite{dynamic2022}.


The theoretical analysis in \cite{jin2021high} relies on a relatively strong assumption, which requires that the errors of stochastic first-order oracles across iterations are independent~\cite[Assumption.~2]{jin2021high}.
However, for our optimization problem \cref{eq:min-problem}, this assumption typical does not hold.
The reason lies in that our objective function $W_\varphi:=\mathbb{E}_{\boldsymbol{y}}\left[ \boldsymbol{1}_{\neg \varphi}(\boldsymbol{x}, \boldsymbol{y}) \right]$ is induced from an indicator function of a semialgebraic set.
At the $k$th step, the absolute error (see \cref{def:szo}) depends on the current position $\seq{x}_k$, which in turn depends on the error of the previous step.

Fortunately, we show that this problem can be addressed.
The cornerstone of our technique is the following \cref{lemma:oracle_error_bound}, which provides a concentration inequality for the sum of the \emph{possibly dependent}, centered oracle error sequence, without requiring the independence assumption.
This enables us to extend the analysis framework of \cite{jin2021high} to our problem and also helps to derive improved bounds.



\begin{lemma}[Concentration of Adaptive Oracle Errors]\label{lemma:oracle_error_bound}
    Consider the zeroth-order oracle $\tilde{W}_\varphi(\boldsymbol{x}, \mathscr{S})$ defined in \eqref{eq:zeroth_order_oracle} for the $0$-$1$ loss function $W_\varphi(\boldsymbol{x}) := \mathbb{E}[\boldsymbol{1}_{\neg \varphi}(\boldsymbol{x}, \boldsymbol{y})]$. Let $\mathbf{e}_k = |\tilde{W}_\varphi(\boldsymbol{x}_k, \mathscr{S}_k) - W_\varphi(\boldsymbol{x}_k)|$ and $\mathbf{e}_k^+ = |\tilde{W}_\varphi(\boldsymbol{x}_k^+, \mathscr{S}_k^+) - W_\varphi(\boldsymbol{x}_k^+)|$ be the stochastic errors at iteration $k$ of the ALOE algorithm. Define $Z_k = \mathbf{e}_k + \mathbf{e}_k^+- \mathbb{E}[\mathbf{e}_k+\mathbf{e}_k^+]$.
    Then, for any $t > 0$ and $s > 0$, the following concentration inequality holds:
    \begin{align}
        \mathbb{P}\left( \frac{1}{t} \sum_{k=0}^{t-1} Z_k > s \right) \leq \exp\left( -\min\left\{ \frac{s^2 t}{4 \nu^2}, \frac{s t}{2 b} \right\} \right),
        \label{eq:martingale_tail_bound}
    \end{align}
    where $\nu$ and $b$ are the sub-exponential parameters as in Theorem \ref{thm:zeroth_order_oracle}.
\end{lemma}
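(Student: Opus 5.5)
We will prove \cref{lemma:oracle_error_bound} by a Chernoff bound applied to the martingale structure of the errors, replacing independence with conditional sub-exponentiality.

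\textbf{Step 1: filtration.} Let $\mathcal{F}_k$ be the $\sigma$-algebra generated by all randomness used before the samples $\mathscr{S}_k,\mathscr{S}_k^+$ are drawn. This includes the initial point, all previous sample sets, and the direction set $\mathscr{U}_k$. Then $\boldsymbol{x}_k$ and $\boldsymbol{x}_k^+$ are $\mathcal{F}_k$-measurable. The fresh sets $\mathscr{S}_k,\mathscr{S}_k^+$ are drawn i.i.d.\ from the law of $\boldsymbol{y}$ and independently of $\mathcal{F}_k$. Hence, conditioned on $\mathcal{F}_k$, $\mathbf{e}_k$ has exactly the law of $e(\boldsymbol{x},\mathscr{S})$ at the frozen point $\boldsymbol{x}=\boldsymbol{x}_k$, and similarly for $\mathbf{e}_k^+$.

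\textbf{Step 2: uniform conditional bounds.} By \cref{thm:zeroth_order_oracle}, the constants $\nu,b$ do not depend on $\boldsymbol{x}$. Therefore, for all $\lambda\in[0,1/b]$,
\[
\mathbb{E}\bigl[e^{\lambda(\mathbf{e}_k-\mathbb{E}[\mathbf{e}_k\mid\mathcal{F}_k])}\bigm|\mathcal{F}_k\bigr]\le e^{\lambda^2\nu^2/2},
\]
and the same holds for $\mathbf{e}_k^+$. The two sample sets are conditionally independent, so the centered sum satisfies the bound with $\nu^2$ replaced by $2\nu^2$.

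There is a subtlety. $Z_k$ is centered by the unconditional mean, whereas a martingale argument needs centering by the conditional mean $\mu_k:=\mathbb{E}[\mathbf{e}_k+\mathbf{e}_k^+\mid\mathcal{F}_k]$. I would handle this in one of two ways.
\begin{itemize}
\item[(a)] Read $\mathbb{E}$ in the definition of $Z_k$ as the conditional expectation. This is the reading the downstream analysis needs, since there $Z_k$ only enters through bounds of the form $\mathbf{e}_k+\mathbf{e}_k^+\le 2\varepsilon_W+Z_k$.
\item[(b)] Keep the unconditional centering and check that it still goes through for this oracle. The conditional mean of $\mathbf{e}_k$ is $\mathbb{E}|\bar X-p|$ with $p=W_\varphi(\boldsymbol{x}_k)$. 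The sub-exponential estimate in \cref{app:proof_zeroth_order} is proved via a bound that is uniform in $p$, so the same MGF bound should hold after replacing the centering by any constant of the same order. This may cost a constant factor, which the factor $4$ in the statement leaves room for.
\end{itemize}
I will write the proof in form (a) and remark on (b).

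\textbf{Step 3: Chernoff bound.} Set $S_t=\sum_{k<t}Z_k$. Apply the tower property, peeling off one factor at a time from $k=t-1$ down to $0$:
\[
\mathbb{E}\,e^{\lambda S_t}\le e^{t\lambda^2\nu^2}\qquad\text{for all }\lambda\in[0,1/b].
\]
Markov's inequality then gives $\mathbb{P}(S_t>st)\le\exp(-\lambda st+t\lambda^2\nu^2)$. Now optimize over $\lambda$.
\begin{itemize}
\item If $s/(2\nu^2)\le 1/b$, take $\lambda=s/(2\nu^2)$. The bound becomes $\exp(-s^2t/(4\nu^2))$.
\item Otherwise $s>2\nu^2/b$. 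Take $\lambda=1/b$. The exponent is $-st/b+t\nu^2/b^2\le -st/b+st/(2b)=-st/(2b)$.
\end{itemize}
Combining the two cases gives the claimed $\exp(-\min\{s^2t/(4\nu^2),st/(2b)\})$.

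\textbf{Main obstacle.} I expect the real difficulty to be Step 2. It must be justified rigorously that conditioning on the adaptively chosen $\boldsymbol{x}_k,\boldsymbol{x}_k^+$ preserves the oracle's sub-exponential property. This is exactly where the point-independence of $\nu,b$ from \cref{thm:zeroth_order_oracle} and the freshness of the samples are essential, and where the centering issue must be settled.
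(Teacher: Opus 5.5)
Your argument is the paper's: condition on the past, use the $\boldsymbol{x}$-uniformity of $(\nu,b)$ from \cref{thm:zeroth_order_oracle} to get $\mathbb{E}[e^{\lambda Z_k}\mid\mathcal{F}_k]\le e^{\lambda^2\nu^2}$, iterate with the tower property to $\mathbb{E}\,e^{\lambda S_t}\le e^{t\lambda^2\nu^2}$ on $[0,1/b]$, and finish with the same two-regime Chernoff optimization. The paper packages that last step as a generic tail bound, applied with $\nu_1=\sqrt{2t}\,\nu$, $b_1=b$, $u=ts$. Your centering concern is well founded. The paper's ``preliminary estimate'' is written with $\mathbf{e}_k-\mathbb{E}[\mathbf{e}_k]$, but it is justified by \cref{thm:zeroth_order_oracle} at the frozen point, which only controls centering at $\mathbb{E}[\mathbf{e}_k\mid\mathcal{F}_k]$. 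So the paper tacitly uses your reading (a), and that is the reading to adopt.

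You should drop remark (b), however, because it is not a constant-factor matter. Let $\mu_k:=\mathbb{E}[\mathbf{e}_k+\mathbf{e}_k^+\mid\mathcal{F}_k]$. Unconditional centering multiplies each conditional MGF by $e^{\lambda(\mu_k-\mathbb{E}\mu_k)}$. This factor is linear in $\lambda$ and cannot be absorbed into any $e^{C\lambda^2\nu^2}$ as $\lambda\to 0$. More fundamentally, $\frac{1}{t}\sum_k Z_k$ then contains the term $\frac{1}{t}\sum_k(\mu_k-\mathbb{E}\mu_k)$. This is an average of functions of the iterates, and it has no reason to concentrate. For example, suppose that, depending on early randomness, the run stays with probability $1/2$ in a region where $W_\varphi$ is (near) $0$, and otherwise in one where $W_\varphi$ is near $1/2$, so that $\mu_k\approx c/\sqrt{N}$ with $c>0$. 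Then for small $s$ the left-hand side of the lemma tends to about $1/2$ rather than $0$ as $t\to\infty$. So the statement only holds with conditional centering. As you observe, that is all the downstream analysis needs, since $\mu_k\le 2\varepsilon_W$.
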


\paragraph{Stable Set.}
We now introduce some basic concepts for our analysis.
Recall that Algorithm \ref{alg:main_procedure} outputs a candidate point $\boldsymbol{x}^+$ and a lower bound $l = B(\boldsymbol{x}^+; \epsilon_0)$. 
The \emph{optimality gap} $(V^* - l)$ consists of two parts:
\begin{equation}
    |V^* - l| = \underbrace{|V^* - V(\seq{x}^+)|}_{\text{Optimization Gap}} + \underbrace{|V(\seq{x}^+) - l|}_{\text{Verification Gap}},
\end{equation}
where the absolute value operators can be safely removed.
The verification gap is bounded by the precision $\epsilon_0$ of the interval arithmetic module in \cref{section 3}, i.e., $|V(\seq{x}^+) - l| < \epsilon_0$. The optimization gap $|V^* - V(\seq{x}^+)| = |W(\boldsymbol{x}^+) - W^*|$ is controlled by the proximity of $\boldsymbol{x}^+$ to the optimal point $\seq{x}^*$. This motivates us to analyze the probability of $\boldsymbol{x}^+$ landing in a \emph{near-optimal region}.

Given $\epsilon > 0$, we define the $\epsilon$-\emph{stable set} within the neighborhood $U\subset D$:
\begin{align}
    \mathcal{S}(\epsilon) := \{ \boldsymbol{x} \in U \mid \|\nabla W_\varphi(\boldsymbol{x})\|_2 < \epsilon \}.
    \label{eq:stable_set}
\end{align}
If $\boldsymbol{x}^+ \in \mathcal{S}(\epsilon)$, then, by applying the mean value theorem to $W$ on $U$ together with the condition $\Vert\nabla W(\boldsymbol{x})\Vert_2 < \epsilon$ that defines $\mathcal{S}(\epsilon)$, the optimization gap satisfies $|W(\boldsymbol{x}^+) - W^*| < \epsilon \cdot d$.
Consequently, the optimality gap is bounded:
\begin{align}
    \boldsymbol{x}^+ \in \mathcal{S}(\epsilon) \quad \Longrightarrow \quad |V^* - l| < \epsilon_0 + \epsilon d.
    \label{eq:error_bound_from_stability}
\end{align}
Therefore, the convergence probability $\mathbb{P}(|V^* - l| < \epsilon_0 + \epsilon d)$ is lower-bounded by $\mathbb{P}(\boldsymbol{x}^+ \in \mathcal{S}(\epsilon))$.

\paragraph{Success Probability of One Trial.}
Recall that in \cref{alg:ALOE}, the candidate $\boldsymbol{x}^+$ is produced by selecting the best result from $M$ independent runs of the ALOE algorithm, each initialized randomly in $D$. 
Let the probability of selecting an initial point within the  neighborhood $U$ be denoted by $q := \mathbb{P}(\boldsymbol{x}_0 \in U)$. 
This probability $q$ is strictly positive because $U$ is a non-empty open subset of $D$, and the initial sampling distribution (e.g., a uniform distribution over $D$) used in the estimators assigns positive measure to $U$.


We first analyze the conditional behavior of a single run starting from a ``good'' initial point.

\begin{theorem}[Conditional Single-Trial Convergence]\label{thm:single_run_convergence}
    Under Assumptions \ref{assump:domain}, \ref{assump:measure}, and \ref{assump:analyze},
    let $\epsilon > 0$ be a target gradient norm tolerance defining the stable set $\mathcal{S}(\epsilon) \subset U$. 
    There exist positive constants $C_1$, $C_2$, $\rho_1 \in (0,1)$, and $\rho_2 \in (0,1)$
    such that, for any initial point $\boldsymbol{x}_0 \in U$ and any sufficiently large number of ALOE steps $K$, the probability that the ALOE (Algorithm \ref{alg:ALOE}) output $\boldsymbol{x}^+$ lies in $\mathcal{S}(\epsilon)$ is bounded below by:
    \begin{align}
        \mathbb{P}\left( \boldsymbol{x}^+ \in \mathcal{S}(\epsilon) \mid \boldsymbol{x}_0 \in U \right) \geq 1 - C_1 \rho_1^K - C_2 \rho_2^K.
        \label{eq:single_run_prob}
    \end{align}
\end{theorem}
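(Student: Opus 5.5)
We follow the stopping-time / renewal-reward argument of \cite{jin2021high} (see also \cite{Jin2024High,Jin2024Sample}) and replace its independence assumption by \cref{lemma:oracle_error_bound}.

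\textbf{Step 1: stopping time and the event $A_k$.} Fix $\boldsymbol{x}_0\in U$ and define the stopping time
\[
T_\epsilon:=\min\{k : \boldsymbol{x}_k\in\mathcal{S}(\epsilon)\}.
\]
It suffices to bound $\mathbb{P}(T_\epsilon>K)$, and also to handle leaving $U$, by a sum of two geometric tails. Call iteration $k$ \emph{true} if the first-order oracle event
\[
\|\boldsymbol{g}_k-\nabla W_\varphi(\boldsymbol{x}_k)\|\le\max\{\varepsilon_g,\kappa\alpha_k\|\boldsymbol{g}_k\|\}
\]
holds. By \cref{thm:first_order_oracle}, each iteration is true with conditional probability at least $1-\delta$ given the past. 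Let $A_k$ be the indicator of this event.

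\textbf{Step 2: deterministic progress on true iterations.} Using $L$-Lipschitz continuity of $\nabla W_\varphi$ on $U$ (\cref{assump:analyze}), take $\epsilon$ large relative to $\varepsilon_g$ (e.g.\ $\epsilon>c\,\varepsilon_g$). Then the following holds whenever $k<T_\epsilon$ and $\alpha_k\le\bar\alpha:=\min\{\alpha_{\max},\,2(1-\theta-\kappa\cdots)/L\}$:
\begin{itemize}
\item a true step passes the Armijo test, so $\alpha$ increases;
\item any accepted step decreases $W_\varphi$ by at least $c_1\alpha_k\epsilon^2-(\mathbf e_k+\mathbf e_k^+)-2\varepsilon_W$.
\end{itemize}
This comes from the standard descent lemma combined with $|\tilde W-W|\le\mathbf e$. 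Define the potential
\[
\Phi_k=\nu_0\bigl(W_\varphi(\boldsymbol{x}_k)-W_\varphi^*\bigr)+(1-\nu_0)\,\alpha_k\epsilon^2 .
\]
Then $\Phi_k$ is bounded by $\Phi_0\le\nu_0+\alpha_{\max}\epsilon^2$, and on each iteration before $T_\epsilon$ it changes by at most $-h(\epsilon)+c_2Z_k+c_3\varepsilon_W$ when the iteration is true, and by at most a bounded increase otherwise. Summing over $k<K$ gives
\[
\sum_k(\text{decrease})\le \Phi_0+c_2\sum_{k<K}Z_k+Kc_3\varepsilon_W .
\]
We take $N$ large enough that $\varepsilon_W=1/(2\sqrt N)$ is small compared to $h(\epsilon)$ (\cref{thm:zeroth_order_oracle}).

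\textbf{Step 3: probabilistic tails.} On $\{T_\epsilon>K\}$, one of the following must occur.
\begin{enumerate}
\item[(a)] The fraction of true iterations is below some $p_0<1-\delta$. Since $\sum_k (A_k-(1-\delta))$ is a submartingale with bounded increments, Azuma--Hoeffding bounds this by $C_1\rho_1^K$ with $\rho_1=\exp(-2(1-\delta-p_0)^2)$.
\item[(b)] The averaged oracle error $\frac1K\sum_k Z_k$ exceeds $s:=h(\epsilon)/(4c_2)$. By \cref{lemma:oracle_error_bound} this has probability at most $\exp(-K\min\{s^2/4\nu^2,s/2b\})=:C_2\rho_2^K$.
\end{enumerate}
Outside (a) and (b), the number of steps with $\alpha_k\ge\bar\alpha\gamma$ is a positive fraction of $K$; this is the usual counting argument that the step size cannot stay small when most steps are true. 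Each such true step reduces $\Phi$ by at least $h(\epsilon)/2$, which contradicts boundedness of $\Phi$ once $K$ is sufficiently large. Hence
\[
\mathbb{P}(T_\epsilon>K)\le C_1\rho_1^K+C_2\rho_2^K .
\]

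\textbf{Main obstacle.} The hard part is keeping the iterates inside $U$, where the Lipschitz assumption is available, and turning $T_\epsilon\le K$ into a statement that the \emph{output} $\boldsymbol{x}^+$ lies in $\mathcal{S}(\epsilon)$.

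We would first try choosing $U$ to be a sublevel-set component $\{W_\varphi<W_\varphi^*+\eta\}\cap U$. Accepted steps can increase $W_\varphi$ only by the oracle error, which is controlled on the complement of event (b). So a small enough $\alpha_{\max}$ keeps the iterates within a slightly larger sublevel set inside $U$.

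For the output, we would argue that after $T_\epsilon$ the same potential argument, restarted, shows that exits from $\mathcal{S}(\epsilon)$ are confined to an exponentially small event. Alternatively, we would interpret $\boldsymbol{x}^+$ as the best iterate, as in \cite{jin2021high}. Absorbing these extra terms into $C_1$ and $C_2$ finishes the proof.
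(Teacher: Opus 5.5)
Your Steps 1--3 follow the same route as the paper. The paper does not re-derive the stopping-time analysis: it imports from \cite{jin2021high} the bound $\mathbb{P}(K_{\min}>t)\le \exp\bigl(-\tfrac{(p-\hat p)^2}{2p^2}t\bigr)+\mathbb{P}\bigl(\tfrac1t\sum_{k<t}Z_k>s\bigr)$, replaces the second term by \cref{lemma:oracle_error_bound}, and takes $C_1=C_2=1$. Your events (a) and (b) are exactly these two terms. One claim in your reconstruction needs repair. With \emph{true} meaning only first-order accuracy, a true iteration with $\alpha_k\le\bar\alpha$ passes the Armijo test only if $\mathbf e_k+\mathbf e_k^+$ stays below $2\varepsilon_W$ plus a margin of order $\alpha_k\|\boldsymbol g_k\|^2$. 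Since $\varepsilon_W$ bounds only the \emph{mean} error, this fails with non-negligible probability. So either the definition of a true iteration must also require small zeroth-order errors, or the step-size counting must budget for these failures. The paper avoids the issue by citing the inequality wholesale.

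The genuine gap is the one you label the main obstacle, and none of your proposed fixes closes it. Steps 1--3 can at best bound the \emph{hitting time} $T_\epsilon$, and only if the descent lemma holds along the whole trajectory up to $T_\epsilon$. The theorem, however, concerns the \emph{returned} point $\boldsymbol x^+=\boldsymbol x_K$, and it assumes a Lipschitz gradient only on $U$. Both missing pieces fail for the same reason: at every iteration there is a $K$-independent positive probability that an accepted step moves the iterate out of $\mathcal S(\epsilon)$ or out of $U$.
\begin{itemize}
\item From a point near $\partial\mathcal S(\epsilon)$, a false first-order iteration (permitted with probability up to $\delta$) can propose a short step along which $W_\varphi$ rises by less than the slack $2\varepsilon_W$. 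Such a step passes the Armijo test with probability bounded away from zero.
\item Since $\boldsymbol x_0$ may lie arbitrarily close to $\partial U$, already the first accepted step can leave $U$, after which your descent argument no longer applies.
\end{itemize}
Hence $\mathbb P(\boldsymbol x_K\notin\mathcal S(\epsilon))$ has no reason to decay in $K$, and the probability of having left $U$ by time $K$ is nondecreasing in $K$. Neither can be absorbed into $C_1\rho_1^K+C_2\rho_2^K$. Your specific fixes fail as follows:
\begin{itemize}
\item The restart claim that exits from $\mathcal S(\epsilon)$ are exponentially rare is unfounded.
\item Event (b) controls only the \emph{average} of the centered errors, not individual excursions.
\item Redefining $U$ as a sublevel component changes the hypothesis \emph{for any $\boldsymbol x_0\in U$}, and with it the constant $q$ used in \cref{thm:main_complexity}.
\item Reading $\boldsymbol x^+$ as the best iterate changes \cref{alg:ALOE}, which returns the last iterate and cannot observe gradient norms.
\end{itemize}
The paper does not resolve this either. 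It asserts $\{\boldsymbol x^+\notin\mathcal S(\epsilon)\}\subseteq\{K_{\min}>K\}$ on the grounds that \emph{once the iterate enters the stable set, the gradient condition is satisfied}. That tacitly assumes iterates never leave $\mathcal S(\epsilon)$, which ALOE does not guarantee, and the paper never addresses leaving $U$. What the argument genuinely yields is $\mathbb P(T_\epsilon>K)\le C_1\rho_1^K+C_2\rho_2^K$ under a global Lipschitz-gradient hypothesis as in \cite{jin2021high}. A statement about the returned point needs a different output rule or additional assumptions.
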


The proof, as well as the explicit expressions for the involved constants, is detailed in Appendix \ref{app:complexity}.
The proof adapts the high-probability complexity analysis of \cite{jin2021high} to our context. The key improvement stems from deriving a sub-exponential tail bound for the averaged oracle error $\frac{1}{t}\sum_{k} (\mathbf{e}_k + \mathbf{e}_k^+)$ without requiring the independence of individual errors $\mathbf{e}_k$, which holds specifically for the $0$-$1$ loss estimator $\tilde{W}$ (see \cref{lemma:oracle_error_bound} or refer to \cref{app:complexity} for more details).

\paragraph{Cumulative Success of Multiple Trials.}
Algorithm \ref{alg:main_procedure} performs $M$ independent ALOE trials from random initial points in $D$. The final output $\boldsymbol{x}^+$ is the candidate with the highest certified lower bound $l$, which typically corresponds to the run that entered the most promising region of the parameter space. The overall success hinges on two conditions: (1) at least one trial starts in the favorable neighborhood $U$, and (2) this particular trial then converges to the stable set $\mathcal{S}(\epsilon)$.

\begin{theorem}[High-Probability Complexity Bound]\label{thm:main_complexity}
    Under Assumptions \ref{assump:domain}, \ref{assump:measure}, and \ref{assump:analyze}, let $l$ be the final lower bound output by \cref{alg:main_procedure} after $M$ independent trials, each running ALOE for $K$ steps. 
    For any $\epsilon > 0$ and the associated $\mathcal{S}(\epsilon)$, there exist constants $C_1, C_2 > 0$, $\rho_1, \rho_2 \in (0,1)$, and a problem-dependent constant $q = \mathbb{P}(\boldsymbol{x}_0 \in U) > 0$, such that
    \begin{equation}
        \mathbb{P}\left( |V^* - l| < \epsilon_0 + \epsilon d \right)
        \geq \big(1 - \exp(- q M) \big) \cdot \big(1 - C_1 \rho_1^K - C_2 \rho_2^K \big) \label{eq:main_complexity_result} 
    \end{equation}
    where $\rho = \max(\rho_1, \rho_2)$. All constants depend solely on the problem parameters and not on $M$ or $K$.
\end{theorem}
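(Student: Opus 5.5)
The proof composes three ingredients: the per-trial success bound of \cref{thm:single_run_convergence}, the implication \cref{eq:error_bound_from_stability}, and the fact that the $M$ trials of \cref{alg:main_procedure} are independent.

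\textbf{Step 1: events.} Let $\mathcal{G}_m := \{\boldsymbol{x}_{0,m} \in U\}$ be the event that trial $m$ starts in $U$. Let $\mathcal{A}_m := \{\boldsymbol{x}^+_m \in \mathcal{S}(\epsilon)\}$ be the event that its ALOE output lands in the stable set. Let $m^\dagger$ be the smallest index with $\mathcal{G}_m$ true, or $m^\dagger = \infty$ if there is none.

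The initial points are sampled i.i.d., so
\[
\mathbb{P}(m^\dagger \le M) = 1-(1-q)^M \ge 1-\exp(-qM),
\]
using $1-q \le e^{-q}$.

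\textbf{Step 2: condition on the first good start.} Condition on $\{m^\dagger = j\}$ for a fixed $j \le M$. The ALOE run of trial $j$ uses oracle randomness independent of the other trials' starting points. Hence, by \cref{thm:single_run_convergence},
\[
\mathbb{P}(\mathcal{A}_j \mid m^\dagger = j) \ge 1 - C_1\rho_1^K - C_2\rho_2^K .
\]
The constants are uniform over $\boldsymbol{x}_0 \in U$; this uniformity is exactly what \cref{thm:zeroth_order_oracle} provides, since its oracle constants do not depend on $\boldsymbol{x}$. Therefore the bound survives averaging over the distribution of $\boldsymbol{x}_{0,j}$ restricted to $U$.

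Summing over $j$ gives
\[
\mathbb{P}\Big(\textstyle\bigcup_m (\mathcal{G}_m\cap\mathcal{A}_m)\Big) \ge \big(1-e^{-qM}\big)\big(1 - C_1\rho_1^K - C_2\rho_2^K\big).
\]

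\textbf{Step 3: transfer to the returned bound.} On the event $\mathcal{A}_j$, \cref{eq:error_bound_from_stability} applied to $\boldsymbol{x}^+_j$ gives
\[
V^* - B(\boldsymbol{x}^+_j;\epsilon_0) < \epsilon_0 + \epsilon d .
\]
The algorithm returns $l = \max_m B(\boldsymbol{x}^+_m;\epsilon_0) \ge B(\boldsymbol{x}^+_j;\epsilon_0)$. Also $l \le V^*$ by \cref{thm:soundness} together with \cref{eq:v-lower}. Combining these,
\[
|V^*-l| = V^* - l \le V^* - B(\boldsymbol{x}^+_j;\epsilon_0) < \epsilon_0 + \epsilon d .
\]
This yields \cref{eq:main_complexity_result}.

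\textbf{Main obstacle.} The delicate step is Step 2: justifying that conditioning on the selection event $\{m^\dagger=j\}$ leaves the law of trial $j$'s ALOE run unchanged. The event $\{m^\dagger=j\}$ depends only on the starting points of trials $1,\dots,j$. By independence of the trials, conditioning on it only restricts $\boldsymbol{x}_{0,j}$ to $U$ and leaves trial $j$'s oracle samples untouched. I would state this explicitly via the tower property: first condition on $\boldsymbol{x}_{0,j}$, then apply \cref{thm:single_run_convergence} pointwise.

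A secondary subtlety is the mean-value step behind \cref{eq:error_bound_from_stability}, which needs the segment from $\boldsymbol{x}^+$ to $\boldsymbol{x}^*$ to lie in a region where the gradient stays small. That implication is stated earlier in the paper, so I take it as given.
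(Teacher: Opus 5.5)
Your proposal is correct and follows essentially the same route as the paper's proof. Both decompose according to the first trial whose initial point lies in $U$, use independence of the trials together with \cref{thm:single_run_convergence}, and sum $q\sum_{m=1}^{M}(1-q)^{m-1} = 1-(1-q)^M \ge 1-\exp(-qM)$. Your version is in fact slightly more careful: you treat the ``first good start succeeds'' event as a subset of the success event (the paper writes an equality there), and you explicitly justify $B(\boldsymbol{x}^+_j;\epsilon_0) \le l \le V^*$.
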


The full proof is in Appendix \ref{app:complexity}.
The key to our proof is by noting that the ``bad'' event $|V^* - l| \geq \epsilon_0 + \epsilon d$ only occurs when at least one of the following two situations happens: (1) All $M$ independent trials start from initial points \emph{outside} the neighborhood $U$. This occurs with probability at most $(1 - q)^M \leq \exp(-q M)$. (2) Given that at least one trial starts in $U$, the \emph{best} such trial (which yields the final output $\boldsymbol{x}^+$) still fails to produce a point in $\mathcal{S}(\epsilon)$. By Theorem \ref{thm:single_run_convergence}, for a trial starting in $U$, this conditional failure probability is at most $C_1 \rho_1^K + C_2 \rho_2^K$.
Applying a union bound to the probabilities of these two failure events yields the lower bound in \eqref{eq:main_complexity_result}. 


From \ref{thm:main_complexity}, the algorithm's iteration complexity satisfies the following properties:
\begin{itemize}
    \item \textbf{Exponential Convergence in $M$:} The term $1 - \exp(- q M)$ shows that for a fixed number of ALOE steps $K$, the probability of achieving an $\epsilon$-accurate solution converges to 1 \emph{exponentially fast} as the number of independent trials $M$ increases. This quantifies the benefit of the multi-start strategy in Algorithm \ref{alg:main_procedure} for exploring the parameter space.
    \item \textbf{Exponential Convergence in $K$:} The term $1 - C_1 \rho_1^K - C_2 \rho_2^K$ indicates that for a trial that starts favorably (within $U$), its success probability converges to 1 exponentially as the number of ALOE steps $K$ increases. This reflects the local rapid convergence of the \cref{alg:ALOE}.
\end{itemize}
To summarize, this high-probability bound for the iteration complexity confirms that \cref{alg:main_procedure} is not only sound, but also efficient in probability, with convergence rates that are exponential in the main algorithmic parameters $M$ and $K$.
In practice, one can either increase the number of trials $M$ to improve the chance of a good initial exploration, or increase the number of steps $K$ of each optimization to ensure deeper exploitation from a good start.

\section{Experiments}\label{section 6}
In this section, we present the experimental results and evaluate the performance of our framework. We implemented \cref{alg:main_procedure} in Python, invoking the \textsc{Realpaver}~\cite{granvilliers2006algorithm} solver to perform the interval arithmetic required described in \cref{section 3}. 
All experiments were conducted in the WSL environment (Ubuntu 24.04 LTS) of a Windows PC, equipped with an 11th Gen Intel Core i7 processor (16 logical cores) and 16 GB of RAM.



\subsection{Stochastic Satisfiability Modulo Theories}
\label{subsec:ssmt_case}

In our first case study, we demonstrate the applicability of our framework to solving SSMT problems over continuous domains. SSMT serves as a powerful quantitative extension of classical SMT by replacing the universal quantifier $\forall$ with a \emph{randomized} quantifier $\randomized$, thereby enabling reasoning with stochastic constraints~\cite{franzle2008stochastic,papadimitriou1985games,teige2008stochastic}. 


Our method directly addresses the subclass of \emph{exists-forall} SSMT formulas:
\begin{align}
    \Phi := \exists x_1 \in D_1 \dots \exists x_n \in D_n\ \randomized_{\mu_1} y_1 \in I_1 \dots \randomized_{\mu_m} y_m \in I_m : \varphi(\boldsymbol{x}, \boldsymbol{y}).
    \label{eq:er_ssmt}
\end{align}
where the subscript $\mu_i$ indicates the probability distribution of the random variable $y_i$. 
The SSMT semantics of $\Phi$, denoted $val(\Phi)$, is defined to be the maximum satisfaction probability over the existential variables, i.e., $val(\Phi) := \max_{\boldsymbol{x} \in \boldsymbol{D}} \mathbb{P}_{\boldsymbol{y}} (\varphi(\boldsymbol{x}, \boldsymbol{y}))$~\cite{gao2015solving}.
Hence, computing $val(\Phi)$ coincides with our problem \cref{eq:max-problem}. 
Up to our knowledge, the tool implemented in \cite{gao2015solving} is the only tool for dealing with such problems, but is unfortunately no longer available. 
So here we do not compare our tool against other tools.


\paragraph{Benchmark Problems.}
We evaluate our algorithm on four SSMT benchmarks. 
For each problem, we analyze the optimal value $val(\Phi)$ and the optimal solution $\seq{x}^*$ by hand.
\begin{itemize}
    \item \textbf{Case 1 ($\Phi_1$):} A two-dimensional problem with a circular-polygonal constraint.
    \begin{align*}
        \Phi_1 := \exists x \in [-1,1]\ \randomized_{\mathcal{U}(-1,1)} y : \left( x^2 + y^2 \leq 1 \right) \wedge \left( y \geq \tfrac{1}{2} \ \vee\ y \geq \tfrac{1}{2}x + \tfrac{1}{2} \right).
    \end{align*}
    Here, $val(\Phi_1) = (\sqrt{5}-1)/4 \approx 0.3090170$ with $x^* = -\frac{\sqrt{5}}{5}\approx -0.4472136$.

    \item \textbf{Case 2 ($\Phi_2$):} A four-dimensional problem involving linear arithmetic.
    \begin{align*}
        \Phi_2 := \exists x, w \in [0,3]\ \randomized_{U(0,5)} y\ \randomized_{U(0,4)} z : \varphi(x,w,y,z),
    \end{align*}
    where $\varphi(x,w,y,z) = \left( x+y \geq z \ \vee\ w+y \geq z \right) \wedge \left( x - y \leq z \right) \wedge \left( w - y \leq z \right)$.
    Here, $val(\Phi_2) = 0.8$, achieved when $\max\{x^*, w^*\} = 2$.

    \item \textbf{Case 3 ($\Phi_3$):} The example is from \cite[Example.~2]{gao2015solving}.
    \begin{align*}
        \Phi_3 := \exists x\in [-1,1]\ \exists a\in(-\infty,+\infty)\ \exists b\in (-\infty,+\infty) \ \randomized_{\mathcal{N}(0,1)} y:\\
        \left( x^2 \leq \frac{1}{9} \lor a^3+2b \geq 0 \right) \land \left( y>0 \lor a^3+2b < -1 \right)
    \end{align*}
    Here, $val(\Phi_3) = 1$, achieved when $-\frac{1}{3}\leq x^* \leq \frac{1}{3}$ and $(a^*)^3 +2b^* <-1$.

    \item \textbf{Case 4 ($\Phi_4$):} The example is from \cite[Example.~3]{gao2015solving}.
    \begin{align*}
        \Phi_4 &:= \exists x \in [-10,10]\ \randomized_{\mathcal{U}(5,25)} y\ 
        \randomized_{\mathcal{U}(-10,10)} z : \\
        &( (x>3)\lor (y<1))\land ((z>x^2 +2) \lor (y\leq 20)) \\
        \land &((x^2>49) \lor (y>7x)) \land ((x<6) \lor (y\geq z)),
    \end{align*}
    Here, $val(\Phi_4) = \frac{23}{32}= 0.71875$, achieved when $ x^* \in (7,10]$.
\end{itemize}
 


\paragraph{Results and Analysis.}
We execute \cref{alg:main_procedure} with a budget of $M=30$ trials and $K=50$ steps per trial. 
The interval arithmetic precision, which controls the tightness of the lower bound, is set $\epsilon_0 = 0.001$, 
Other hyper-parameters can be found in \cref{app:parameters}.
The experimental results are presented in Table \ref{tab:ssmt_results}.

\begin{table}[t]
    \centering
    \caption{Experimental results for SSMT benchmarks.}\label{tab:ssmt_results}
    \begin{tabular}{ccccc}
        \toprule
        \textbf{ID}\quad & \textbf{Lower Bound}\quad & \textbf{Certification ($\boldsymbol{x}^+$)} \quad & \textbf{Error ($val(\Phi)-l$)} & \textbf{CPU Time(s)}\\
        \midrule
        $\Phi_1$ & $0.30814$ & $[-0.43897]$ & $\approx 8.7 \times 10^{-4}$ & $125.48$ \\
        $\Phi_2$ & $0.79986$ & $[1.21011,\ 1.98876]$ & $\approx 1.3 \times 10^{-4}$ & $119.97$\\
        $\Phi_3$ & $0.99999$ & $[0.09653,-0.80612,-1.85433 ]$ & $\approx 2.0 \times 10^{-9}$ &$49.09$\\
        $\Phi_4$ & $0.71874$ & $[8.71291]$ &    $\approx 4.6 \times 10^{-6}$ & $33.41$\\
        \bottomrule
    \end{tabular}
\end{table}



For $\Phi_1$ and $\Phi_2$, which satisfy the smoothness assumptions (e.g., gradient Lipschitz continuity) underlying our theoretical convergence analysis, the computed lower bounds exhibit errors on the order of $10^{-4}$ to $10^{-3}$. This aligns perfectly with the specified interval arithmetic precision $\epsilon_0 = 0.001$, empirically validating the predicted relationship between verification precision and final error.
For $\Phi_3$ and $\Phi_4$, which are adapted from \cite{gao2015solving} and involve more complex logical structure (e.g., unbounded existential domains, constraints without random variables and so on), our theoretical assumptions may not hold in strictness. Nevertheless, the algorithm demonstrated remarkable robustness, achieving exceptionally small errors ($10^{-6}$ to $10^{-9}$). This indicates that our methodology
possesses significant practical efficacy even beyond its proven theoretical guarantees.
Collectively, these results demonstrate that our framework provides reliable and high-accuracy lower bounds for continuous exists-forall SSMT problems of small scales.

\subsection{Trajectory Planning under Uncertainty}
\label{subsec:trajectory_planning}

This case study demonstrates the practical application of our framework to a stochastic trajectory planning problem. The goal is to compute a sequence of waypoints that, under environmental uncertainty, maximize the probability of satisfying collision avoidance and dynamical feasibility constraints.

\paragraph{Problem Formulation.}
We consider a 2D trajectory planning problem over $N$ time steps, originating from the initial position $(p_{x,0}, p_{y,0}) = (0,0)$. The deterministic parameters 
    $\boldsymbol{x} := (p_{x,1}, p_{y,1}, \dots, p_{x,N}, p_{y,N}) \in \mathbb{R}^{2N}$    
represents the planned coordinates, which are subject to environmental uncertainties $\seq{w} = (w_1, w_2, w_3)$. Specifically, $w_1, w_2 \sim \mathcal{N}(0, 0.5)$ represent additive Gaussian noise in the $x$ and $y$ directions (e.g., wind), while $w_3 \sim \mathcal{U}(0.8, 1.2)$ denotes a multiplicative efficiency factor (e.g., propulsion efficiency). For each step $i \in \{1, \dots, N\}$, the trajectory must satisfy \emph{safety} constraints relative to $M$ circular obstacles (with centers $\boldsymbol{o}_m$ and radius $R_m$) and \emph{reachability} constraints limited by a nominal step length $L$ (fixed as $L=3$), defined respectively as:
\begin{align}
    \phi_{\text{safe},i,m}(\boldsymbol{x}, \seq{w}) &= R_m^2 - \left\| (p_{x,i} + w_1, p_{y,i} + w_2) - \boldsymbol{o}_m \right\|_2^2 \leq 0, \\
    \phi_{\text{reach},i}(\boldsymbol{x}, \seq{w}) &= \left\| (p_{x,i}, p_{y,i}) - (p_{x,i-1}, p_{y,i-1}) \right\|_2^2 - (L \cdot w_3)^2 \leq 0.
\end{align}

The overall objective is to find the trajectory $\boldsymbol{x}^*$ that maximizes the probability of satisfying all constraints jointly, which is equivalent to minimizing the violation probability $W(\boldsymbol{x})$:
{\small \begin{align}
    \min_{\boldsymbol{x} \in \mathbb{R}^{2N}} W(\boldsymbol{x}) := \mathbb{P}_{\seq{w}}\left( \bigvee_{i=1}^{N} \bigvee_{m=1}^{M} \left( \phi_{\text{safe},i,m}(\boldsymbol{x}, \seq{w}) > 0 \right) \lor \bigvee_{i=1}^{N} \left( \phi_{\text{reach},i}(\boldsymbol{x}, \seq{w}) > 0 \right) \right). 
\end{align} } 
The satisfaction probability $V(\boldsymbol{x}) = 1 - W(\boldsymbol{x})$ called the \emph{planning confidence}.

\paragraph{Experimental Setup.}
We evaluate our algorithm on a suite of 12 benchmark instances, varying the number of steps $N$ and the number/configuration of obstacles $M$. The obstacles are randomly generated within a pre-specified range.
For each trajectory length $N \in \{4, 5, 6\}$, we generate instances with $M = 1, 2, 3, 4$ obstacles 
constructing a total of 12 distinct trajectory planning problems. 
The specific obstacle configurations for each instance are detailed in Table \ref{tab:traj_results}. 

\begin{table}[t]
\centering
\caption{Experimental results for the stochastic trajectory planning benchmarks.}\label{tab:traj_results}
\small
\begin{tabular}{ccccc}
\toprule
\textbf{ID} & \textbf{N} & \textbf{Obstacles (Center, Radius)} & \textbf{Lower Bound ($l$)} & \textbf{CPU Time (s)}  \\
\midrule
T1 & 4 & $(6,2,2.5)$                    & 0.99733 & 200.00 \\
T2 & 4 & $(2,-2,2)$, $(5,5,3)$           & 0.99672 & 628.36  \\
T3 & 4 & $(1,1,1)$, $(4,-4,2)$, $(-7,1,1.5)$ & 0.99209 & 1763.42  \\
T4 & 4 & $(2,-7,2.5)$, $(-5,-4,2)$, $(-8,2,2)$, $(6,3,1)$ & 0.99999 & 306.10 \\
T5  & 5 & $(5,5,2)$                      & 0.99417 & 499.95 \\
T6  & 5 & $(2,3,2)$, $(5,4,3)$          & 0.99295 & 589.81  \\
T7  & 5 & $(-3,3,2)$, $(6,1,2)$, $(1,-6,2)$ & 0.99929 & 457.27  \\
T8  & 5 & $(2,2,3)$, $(8,3,2)$, $(5,8,2.5)$, $(-2,8,1.5)$ & 0.99066 & 354.84 \\
T9 & 6 & $(5,5,3)$                      & 0.97648 & 381.26  \\
T10 & 6 & $(3,3,2)$, $(6,6,2.5)$         & 0.99547 & 870.75  \\
T11 & 6 & $(8,3,2)$, $(-5,-6,3)$, $(1,-2,3)$    & 0.96605 & 2089.93  \\
T12 & 6 & $(2,-8,2)$, $(-5,-3,1)$, $(-8,-3,3)$, $(4,5,1.5)$ & 0.99999 &  504.72  \\

\bottomrule
\end{tabular}
\end{table}

\paragraph{Results and Analysis.}
The algorithm parameters are consistent with Section \ref{subsec:ssmt_case},
except the interval arithmetic precision is set $\epsilon_0 = 0.1$. 
The results are summarized in Table \ref{tab:traj_results}.

The experimental results across all $12$ benchmarks validate the practical efficacy and robustness of our framework for stochastic trajectory planning. The algorithm consistently yields high-confidence plans, with certified lower bounds exceeding \textbf{0.99} in $10$ instances and remaining above $0.96$ in the others. 
The computed waypoints $\boldsymbol{x}^*$ intuitively navigate dense obstacle fields---such as in T4 and T12 where $l=0.99999$---identifying non-trivial corridors in complex geometries. 

Computational efficiency is maintained with $10$ of the $12$ instances solved under $900$ seconds; notably, the solution time \emph{does not} exhibit a direct correlation with the size of the constraints (e.g., number of obstacles $M$ or steps $N$),
as evidenced by T12 ($M=4$) solving faster than T11 ($M=3$). This aligns with the complexity analysis in Theorem \ref{thm:main_complexity}, where the cost is driven by the oracle sampling and ALOE iterations. 

In summary, the experimental results presented in this section confirms that our algorithm can efficiently solve stochastic constraints of small or middle size and produce highly accurate lower bound.
The efficiency of our algorithm hinges on two factors: (1) the convergence complexity of \cref{alg:main_procedure}, which depends on the well-behavedness of the objective function $V_\varphi$ (or $W_\varphi$); and (2) the algorithmic complexity of the interval arithmetic procedure in \cref{section 3}, which depends on the dimension of random variables. 


\section{Conclusion}\label{section 7}

In this work, we proposed a novel framework for stochastic constraint solving that bridges the gap between heuristic efficiency and formal rigor. By integrating \emph{oracle-based stochastic gradient descent} for the rapid identification of parameter candidates with \emph{interval arithmetic} for the formal certification of satisfaction probability, our approach produces sound and increasingly tight lower bounds with a probabilistic convergence guarantee. Experimental results confirm that the framework is robust and consistently yields high-confidence guarantees for simple SSMT problems and complex trajectory planning tasks.
Future work includes extending the framework to richer, more practical constraint classes (e.g., transcendental or hybrid deterministic–stochastic constraints) while preserving sound certification, and exploring its integration with classical SMT techniques.


\clearpage

\bibliographystyle{splncs04}
\bibliography{bibfiles/refer}

\newpage 
\appendix

\section{Proof of {\cref{thm:W-cont}}}
\label{app:abs_cont_proof}

\begin{proof}
    The proof proceeds in several steps, beginning with a reformulation of the problem.

    \textbf{Step 1: Problem Reformulation.}
    Recall the structure of $\neg \varphi$:
    \begin{align}
        \neg \varphi(\boldsymbol{x}, \boldsymbol{y}) = \bigvee_{i=1}^k \bigwedge_{j=1}^l \left( P_{i,j} (\boldsymbol{x}, \boldsymbol{y}) < 0 \right).
    \end{align}
    By the definition of $W_\varphi(\seq{x})$ in \cref{eq:min-problem}, we have:
    \begin{equation}\label{eq:prob_union}
        W_{\varphi}(\boldsymbol{x}) = \mathbb{P}_{\boldsymbol{y}}\left( \bigcup_{i=1}^k \bigcap_{j=1}^l \{ \boldsymbol{y} : P_{i,j}(\boldsymbol{x}, \boldsymbol{y}) < 0 \} \right) 
    \end{equation}
    We define 
    \begin{equation}\label{eq:minmax_def}
        M(\boldsymbol{x}, \boldsymbol{y}) := \min_{1 \leq i \leq k} \max_{1 \leq j \leq l} P_{i,j}(\boldsymbol{x}, \boldsymbol{y}). 
    \end{equation}
    So we have 
    \begin{equation}
        F(\boldsymbol{x}) := \mathbb{P}_{\boldsymbol{y}}( M(\boldsymbol{x}, \boldsymbol{y}) < 0 )=M(\seq{x},\seq{y}).        
    \end{equation}
    Hence, it is equivalent to show that $F(\seq{x})$ is absolutely continuous.
    
    \textbf{Step 2: Reduction via Inclusion-Exclusion.}
    From \eqref{eq:prob_union}, by applying the inclusion-exclusion principle, we can express \( F(\boldsymbol{x}) \) as a finite linear combination of terms of the form:
    \begin{equation}
        F_{I}(\boldsymbol{x}) = \mathbb{P}_{\boldsymbol{y}}\left( \bigcap_{s \in I} \{ \boldsymbol{y} : Q_s(\boldsymbol{x}, \boldsymbol{y}) < 0 \} \right),    
    \end{equation}
    where each \( Q_s \) is one of the original polynomials \( P_{i,j} \). 
    Since absolute continuity is preserved under finite linear combinations, it suffices to prove that any function of the type 
    \begin{equation}
       G(\boldsymbol{x}) := \mathbb{P}_{\boldsymbol{y}}\left( \bigcap_{s=1}^{r} \{ \boldsymbol{y} : Q_s(\boldsymbol{x}, \boldsymbol{y}) < 0 \} \right) 
    \end{equation}
    is absolutely continuous.

    \textbf{Step 3: Single-Directional Absolute Continuity.}
    Define 
    \begin{equation}
    \Phi(\boldsymbol{x}, \boldsymbol{y}) := \max_{1 \leq s \leq r} Q_s(\boldsymbol{x}, \boldsymbol{y}).    
    \end{equation}
    Recall that \( f_{\seq{y}}(\boldsymbol{y}) \) is the probability density function of \( \boldsymbol{y} \), we have
    \begin{equation}\label{eq:G}
        G(\boldsymbol{x}) = \mathbb{P}_{\boldsymbol{y}}\left( \Phi(\boldsymbol{x}, \boldsymbol{y}) < 0 \right) = \int_{\mathbb{R}^m} \boldsymbol{1}_{\{\Phi(\boldsymbol{x}, \boldsymbol{y}) < 0\}} f_{\seq{y}}(\boldsymbol{y}) \, d\boldsymbol{y}.
    \end{equation}

    Fix a direction \( \boldsymbol{e} \in \mathbb{R}^n \) and a base point \( \boldsymbol{x}_0 \). Define \( h(t) = G(\boldsymbol{x}_0 + t \boldsymbol{e}) \). In the following, we show \( h(t) \) is absolutely continuous.

    Consider its distributional derivative
    \begin{equation}
            h'(t) = -\int_{\mathbb{R}^m} f(y) \,\delta(\Phi(\boldsymbol{x}_0 + t \boldsymbol{e}, \boldsymbol{y})) \frac{\partial \Phi}{\partial t} \, dy.
    \end{equation}
    where \( \delta(\cdot) \) is the Dirac function.
     Since \( \Phi \) is piecewise polynomial and thus differentiable almost everywhere, for a fixed \( \boldsymbol{y} \), the function \( g_{\boldsymbol{y}}(t) = \Phi(\boldsymbol{x}_0 + t \boldsymbol{e}, \boldsymbol{y}) \) is a continuous, piecewise-polynomial function in \( t \) with a bounded number of pieces (depending on \( r \) in \cref{eq:G}).

    Then, since each \( Q_s \) is non-constant in \( \boldsymbol{y} \), for any fixed \( s \) and $\boldsymbol{x}_0$, and $\boldsymbol{e}$, the set of \( \boldsymbol{y} \) for which \( Q_s(\boldsymbol{x}_0 + t \boldsymbol{e}, \boldsymbol{y}) \) is identically zero as a polynomial in \( t \) has Lebesgue measure zero (similar to \cref{remark:ineq}). 
    Consequently, for almost every \( \boldsymbol{y} \), \( g_{\boldsymbol{y}}(t) \) is a non-zero piecewise polynomial, and the equation \( g_{\boldsymbol{y}}(t) = 0 \) has a uniformly bounded number of real roots \( N \) (depending only on the degrees of \( Q_s \) and \( r \)).

    This bound allows us to control the derivative of $h(t)$. Formally, we have the estimate:
    \begin{equation}
        \begin{aligned}
            \int \left| h'(t) \right| \, dt 
            &\leq \int_{\mathbb{R}^m} f_{\seq{y}}(\boldsymbol{y}) \left( \int \left| \delta(g_{\boldsymbol{y}}(t)) \, g_{\boldsymbol{y}}'(t) \right| dt \right) d\boldsymbol{y} \\
            &\leq N \int_{\mathbb{R}^m} f_{\seq{y}}(\boldsymbol{y}) \, d\boldsymbol{y}\\ 
            & = N < \infty,
        \end{aligned}
    \end{equation}
    where the inner integral counts the roots of \( g_{\boldsymbol{y}}(t)=0 \), which is bounded by \( N \) for almost every \( \boldsymbol{y} \). The finiteness of this integral implies \( h(t) \) is absolutely continuous.

    \textbf{Step 4: Multivariate Absolute Continuity.}
    The argument in Step 3 holds for every direction \( \boldsymbol{e} \), particularly for the coordinate axes. Therefore, \( G(\boldsymbol{x}) \) is absolutely continuous on almost every line parallel to a coordinate axis.
    Furthermore, its distributional partial derivatives satisfy
    \begin{equation}
    \int \left| \frac{\partial G}{\partial x_i}(\boldsymbol{x}) \right| d\boldsymbol{x} \leq N < \infty, 
    \end{equation}
    implying \( \nabla G \in L^1 \). Consequently, \( G \) belongs to the Sobolev space \( W^{1,1}_{\text{loc}}(\mathbb{R}^n) \). A fundamental result in analysis \cite{Rudin1991Functional} states that functions in \( W^{1,1}_{\text{loc}}(\mathbb{R}^n) \) are absolutely continuous in the multivariate sense.
    
    \textbf{Step 5: Conclusion.}
    Since \( G(\boldsymbol{x}) \) is absolutely continuous and \( F(\boldsymbol{x}) \) (thus also \( W_{\varphi}(\boldsymbol{x}) \)) is a finite linear combination of such functions, it follows that \( W_{\varphi}(\boldsymbol{x}) \) is absolutely continuous on \( \mathbb{R}^n \). 
\end{proof}

\section{Proof of Theorem \ref{thm:zeroth_order_oracle}}
\label{app:proof_zeroth_order}

This appendix provides the proof of Theorem \ref{thm:zeroth_order_oracle}, which establishes that $\tilde{W}$ is a probabilistic zeroth-order oracle. The core of the proof lies in the following lemma concerning the sub-exponential property of the centered indicator function.

\begin{lemma}\label{lem:centered_indicator_subexp}
    Fix a $\boldsymbol{x}\in D$ and denote $p:=W_{\varphi}(\boldsymbol{x})$. The centered random variable $Z := \boldsymbol{1}_{\neg \varphi}(\boldsymbol{x}, \boldsymbol{Y}) - p$ has variance $\operatorname{Var}(Z) \leq (1/2)^2$ and is sub-exponential with parameters $(\nu, b) = (1/2, 0)$, i.e., its moment generating function satisfies
    \begin{equation}
    \mathbb{E}[e^{\lambda Z}] \leq \exp\left( \frac{\lambda^2}{8} \right) \quad \text{for all } \lambda \in \mathbb{R}.
    \end{equation}
\end{lemma}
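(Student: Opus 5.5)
The plan is to treat $Z$ as a centered Bernoulli variable and apply Hoeffding's lemma. First observe that, for fixed $\boldsymbol{x}\in D$, the indicator $X:=\boldsymbol{1}_{\neg\varphi}(\boldsymbol{x},\boldsymbol{Y})$ takes values in $\{0,1\}$. By \cref{eq:min-problem}, $\mathbb{P}(X=1)=W_\varphi(\boldsymbol{x})=p$, so $X\sim\mathrm{Bernoulli}(p)$ and $Z=X-p$ satisfies $\mathbb{E}[Z]=0$. The variance claim is immediate: $\operatorname{Var}(Z)=\operatorname{Var}(X)=p(1-p)\le 1/4=(1/2)^2$, since $p(1-p)$ is maximized at $p=1/2$.

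For the moment generating function bound, note that $Z$ is a mean-zero random variable supported in the interval $[-p,1-p]$, whose length is $1$. Hoeffding's lemma states that a centered variable with values in $[a,b]$ satisfies $\mathbb{E}[e^{\lambda Z}]\le\exp(\lambda^2(b-a)^2/8)$. With $b-a=1$ this gives exactly $\exp(\lambda^2/8)$ for every $\lambda\in\mathbb{R}$. In the sub-Gaussian sense this is the statement that $Z$ is sub-exponential with $\nu=1/2$ and no restriction on $\lambda$, which is what the convention $b=0$ expresses.

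To keep the argument self-contained I would also sketch the proof of Hoeffding's lemma in this Bernoulli case. Write $\psi(\lambda):=\log\mathbb{E}[e^{\lambda Z}]=-\lambda p+\log(1-p+pe^{\lambda})$. Then $\psi(0)=0$ and $\psi'(0)=0$. Moreover, $\psi''(\lambda)=q_\lambda(1-q_\lambda)$, where $q_\lambda:=pe^\lambda/(1-p+pe^\lambda)\in[0,1]$, so $\psi''(\lambda)\le 1/4$. Taylor's theorem with remainder then yields $\psi(\lambda)\le\lambda^2/8$. The degenerate cases $p\in\{0,1\}$ are trivial because then $Z\equiv0$.

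The only step I expect to need real care is the computation of $\psi''$ and its identification as a Bernoulli variance under the exponentially tilted measure. This step is routine, and I do not anticipate a genuine obstacle. In particular, the bound does not depend on $\boldsymbol{x}$, which is exactly the uniformity required for \cref{thm:zeroth_order_oracle}.
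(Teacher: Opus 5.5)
Your proof is correct and follows essentially the same route as the paper. You identify $Z$ as a centered Bernoulli variable, bound its variance by $p(1-p)\le 1/4$, and control $\psi(\lambda)=\log\mathbb{E}[e^{\lambda Z}]$ via $\psi(0)=\psi'(0)=0$ and $\psi''(\lambda)=q_\lambda(1-q_\lambda)\le 1/4$ under the tilted measure, then finish with Taylor's theorem; invoking Hoeffding's lemma for a variable supported on an interval of length $1$ is only a name for this same computation.
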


\begin{proof}
    For a fixed $\boldsymbol{x}$, the random variable $\boldsymbol{1}_{\neg \varphi}(\boldsymbol{x}, \boldsymbol{Y})$ follows a Bernoulli distribution with success probability $p$. Therefore, $Z$ is distributionally equivalent to $U - p$, where $U \sim \text{Bernoulli}(p)$. Its variance is $\operatorname{Var}(Z) = p(1-p) \leq 1/4 = (1/2)^2$.

    We now bound the moment generating function of $Z$.
    \begin{equation}
    \begin{aligned}
        \mathbb{E}[e^{\lambda Z}] &= e^{-\lambda p} \mathbb{E}[e^{\lambda U}] = e^{-\lambda p} \left( (1-p) + p e^{\lambda} \right) \\
        &= (1-p)e^{-\lambda p} + p e^{\lambda(1-p)}.
    \end{aligned}    
    \end{equation}
    Define $\psi(\lambda) := \log \mathbb{E}[e^{\lambda Z}]$, then we have $\psi(0) = 0$ and $\psi'(0) = \mathbb{E}[Z] = 0$.

    Consider the \emph{tilted distribution} $\mathbb{P}_\lambda$ defined by 
    \begin{equation}
       \mathbb{P}_\lambda(U=1) = \frac{p e^{\lambda}}{(1-p) + p e^{\lambda}}. 
    \end{equation}
    Let $q_\lambda = \mathbb{P}_\lambda(U=1)$. A standard property of cumulant generating functions \cite{renyi2007probability} gives $\psi''(\lambda) = \operatorname{Var}_\lambda (U) = q_\lambda (1 - q_\lambda)$. Since $q_\lambda (1 - q_\lambda) \leq 1/4$ for any $q_\lambda \in [0,1]$, we have $\psi''(\lambda) \leq 1/4$ for all $\lambda \in \mathbb{R}$.

    Applying Taylor's theorem to $\psi(\lambda)$ at $0$, there exists some $\xi\in [0,\lambda]$ such that
    \begin{equation}
    \psi(\lambda) = \underbrace{\psi(0)}_{=0} + \underbrace{\psi'(0)}_{=0} \lambda + \frac{1}{2} \psi''(\xi) \lambda^2 \leq \frac{1}{2} \cdot \frac{1}{4} \cdot \lambda^2 = \frac{\lambda^2}{8}.    
    \end{equation}
    Hence, $\mathbb{E}[e^{\lambda Z}] = e^{\psi(\lambda)} \leq \exp\left( \lambda^2 / 8 \right)$ for all $\lambda \in \mathbb{R}$. This matches the definition of a sub-exponential random variable with parameters $(\nu, b) = (1/2, 0)$, where $\nu^2/2 = 1/8$ and the tail parameter $b=0$ indicates the bound holds for all $\lambda$.
\end{proof}

\begin{proof}[Proof of Theorem \ref{thm:zeroth_order_oracle}]
    Given a sample set $\mathscr{S} = \{\boldsymbol{Y}_i\}_{i=1}^N$ of size $N = |\mathscr{S}|$, the oracle estimate is $\tilde{W}_\varphi(\boldsymbol{x}, \mathscr{S}) = \frac{1}{N} \sum_{i=1}^N \boldsymbol{1}_{\neg \varphi}(\boldsymbol{x}, \boldsymbol{Y}_i)$. The estimation error is $e(\boldsymbol{x}, \mathscr{S}) = |\tilde{W}_\varphi(\boldsymbol{x}, \mathscr{S}) - W_{\varphi}(\boldsymbol{x})|$.

    Lemma \ref{lem:centered_indicator_subexp} establishes that each term $\boldsymbol{1}_{\neg \varphi}(\boldsymbol{x}, \boldsymbol{Y}_i) - W_{\varphi}(\boldsymbol{x})$ is a centered, sub-exponential random variable with parameters $(1/2, 0)$.

   The two properties in Theorem \ref{thm:zeroth_order_oracle}, the bound on $\mathbb{E}[e(\boldsymbol{x}, \mathscr{S})]$ and the control of the moment generating function of its deviation, are standard characterizations for the concentration of such sub-exponential averages. 
   They follow directly by applying \cite[Prop.~1]{jin2021high} to our estimator. Substituting the parameters $(\nu, b) = (1/2, 0)$ into these generic results yields the specific constants $\varepsilon_W = 1/(2\sqrt{N})$ and $\nu = b = 4e^2 /\sqrt{N}$ as stated in the theorem. 
\end{proof}

\section{Proofs in \cref{section 5}}
\label{app:complexity}


\subsection{Proof of \cref{lemma:oracle_error_bound}}
\label{app:01_loss_lemma}

\begin{proof}
The proof proceeds in three main steps.

\noindent\textbf{Step 1.} First, we prove a general tail bound for a zero-mean, $(\nu_1, b_1)$ sub-exponential random variable $S$. Specifically, for any $u > 0$,
\begin{equation}
\mathbb{P}(S > u) \leq \exp\left(-\min\left\{\frac{u^2}{2\nu_1^2},\frac{u}{2b_1}\right\}\right).
\end{equation}

\noindent\emph{Proof of Step 1.} For any $\lambda \in [0, 1/b_1]$, applying Markov's inequality yields:
\begin{equation}
\begin{aligned}
\mathbb{P}(S > u) &= \mathbb{P}\bigl(\exp(\lambda S) > \exp(\lambda u)\bigr) \\
&\leq \exp(-\lambda u) \, \mathbb{E}\bigl[\exp(\lambda S)\bigr].
\end{aligned}    
\end{equation}
Since $S$ is zero-mean and $(\nu_1, b_1)$ sub-exponential, we have $\mathbb{E}[\exp(\lambda S)] \leq \exp(\lambda^2 \nu_1^2 / 2)$ for $\lambda \in [0, 1/b_1]$. Thus,
\begin{equation}
\mathbb{P}(S > u) \leq \exp\left(\frac{\lambda^2 \nu_1^2}{2} - \lambda u\right).
\end{equation}
Because $\lambda$ is arbitrary within $[0, 1/b_1]$, we minimize the right-hand side over $\lambda$:
\begin{equation}
\mathbb{P}(S > u) \leq \min_{\lambda \in [0, 1/b_1]} \exp\left(\frac{\lambda^2 \nu_1^2}{2} - \lambda u\right).
\end{equation}
The minimum of the function $f(\lambda) = \lambda^2 \nu_1^2/2 - \lambda u$ on $[0, 1/b_1]$ is attained either at the unconstrained minimizer $\lambda^* = u / \nu_1^2$ (if it lies within the interval) or at the boundary $\lambda^* = 1/b_1$. A standard calculation yields the bound:
\begin{equation}
\min_{\lambda \in [0, 1/b_1]} \exp\left(\frac{\lambda^2 \nu_1^2}{2} - \lambda u\right) \leq \exp\left(-\min\left\{\frac{u^2}{2\nu_1^2},\frac{u}{2b_1}\right\}\right).
\end{equation}
This completes the proof of Step 1.

\noindent\textbf{Step 2.} Let $S_t = \sum_{k=0}^{t-1} Z_k$ for $t \geq 1$. Clearly, $\mathbb{E}[S_t] = 0$. We now show that $S_t$ is $(\sqrt{2t}\,\nu, b)$ sub-exponential, where $\nu$ and $b$ are the sub-exponential parameters from Theorem \ref{thm:zeroth_order_oracle}, i.e.,
\begin{equation}
\mathbb{E}\bigl[\exp(\lambda S_t)\bigr] \leq \exp\bigl(t\lambda^2 \nu^2\bigr), \qquad \forall \lambda \in \left[0, \frac{1}{b}\right].
\end{equation}

\noindent\emph{Proof of Step 2.} We prove by induction on $t$. Let $\{(\boldsymbol{x}_k, \boldsymbol{x}_k^+)\}$ denote the iterates of the ALOE algorithm, which form a stochastic process. Let $\mathcal{F}_k$ be the $\sigma$-algebra generated by this process up to iteration $k$.

\noindent\textit{Preliminary estimate.} For any $k \in \mathbb{N}$ and $\lambda \in [0, 1/b]$, we condition on $\mathcal{F}_k$ and use the properties of the oracle errors. Given $\mathcal{F}_k$, the points $\boldsymbol{x}_k$ and $\boldsymbol{x}_k^+$ are fixed, and the oracle errors $\mathbf{e}_k$ and $\mathbf{e}_k^+$ become conditionally independent. Moreover, by Theorem \ref{thm:zeroth_order_oracle}, each centered error $\mathbf{e}_k - \mathbb{E}[\mathbf{e}_k]$ (and similarly for $\mathbf{e}_k^+$) is $(\nu, b)$ sub-exponential. Hence,
\begin{equation}
\begin{aligned}
&\quad \mathbb{E}\bigl[\exp(\lambda Z_k) \mid \mathcal{F}_k\bigr]\\
&= \mathbb{E}\Bigl[\exp\bigl(\lambda (\mathbf{e}_k + \mathbf{e}_k^+ - \mathbb{E}[\mathbf{e}_k+\mathbf{e}_k^+])\bigr) \mid \mathcal{F}_k\Bigr] \\
&= \mathbb{E}\bigl[\exp(\lambda (\mathbf{e}_k - \mathbb{E}[\mathbf{e}_k])) \mid \mathcal{F}_k\bigr] \,
   \mathbb{E}\bigl[\exp(\lambda (\mathbf{e}_k^+ - \mathbb{E}[\mathbf{e}_k^+])) \mid \mathcal{F}_k\bigr] \\
&\leq \exp(\lambda^2\nu^2/2) \cdot \exp(\lambda^2\nu^2/2) \\
&= \exp(\lambda^2 \nu^2), \qquad \forall \lambda \in [0, 1/b].
\end{aligned}    
\end{equation}

\noindent\textit{Base case $t=1$.} For $t=1$, $S_1 = Z_0$. Using the law of total expectation and the preliminary estimate with $k=0$,
\begin{equation}
\begin{aligned}
\mathbb{E}\bigl[\exp(\lambda S_1)\bigr]
&= \mathbb{E}\bigl[\exp(\lambda Z_0)\bigr] \\
&= \mathbb{E}_{\mathcal{F}_0}\Bigl[ \mathbb{E}\bigl[\exp(\lambda Z_0) \mid \mathcal{F}_0\bigr] \Bigr] \\
&\leq \mathbb{E}_{\mathcal{F}_0}\bigl[\exp(\lambda^2 \nu^2)\bigr] \\
&= \exp(\lambda^2 \nu^2), \qquad \forall \lambda \in [0, 1/b].
\end{aligned}    
\end{equation}

\noindent\textit{Inductive step.} Assume that for some $t \geq 1$,
\begin{equation}
\mathbb{E}\bigl[\exp(\lambda S_t)\bigr] \leq \exp\bigl(t\lambda^2 \nu^2\bigr), \qquad \forall \lambda \in [0, 1/b].
\end{equation}
Consider $S_{t+1} = Z_t + S_t$. Using the law of total expectation and the conditional independence of $Z_t$ and $S_t$ given $\mathcal{F}_t$ (since, given $\mathcal{F}_t$, the past errors and $Z_t$ depends only on oracle samples), we have:
\begin{align*}
\mathbb{E}\bigl[\exp(\lambda S_{t+1})\bigr]
&= \mathbb{E}\bigl[\exp(\lambda Z_t) \exp(\lambda S_t)\bigr] \\
&= \mathbb{E}_{\mathcal{F}_t}\Bigl[ \mathbb{E}\bigl[\exp(\lambda Z_t) \exp(\lambda S_t) \mid \mathcal{F}_t\bigr] \Bigr] \\
&= \mathbb{E}_{\mathcal{F}_t}\Bigl[ \mathbb{E}\bigl[\exp(\lambda Z_t) \mid \mathcal{F}_t\bigr] \,
   \mathbb{E}\bigl[\exp(\lambda S_t) \mid \mathcal{F}_t\bigr] \Bigr] \\
&\leq \mathbb{E}_{\mathcal{F}_t}\Bigl[ \exp(\lambda^2 \nu^2) \,
   \mathbb{E}\bigl[\exp(\lambda S_t) \mid \mathcal{F}_t\bigr] \Bigr] \quad \text{(by preliminary estimate)} \\
&= \exp(\lambda^2 \nu^2) \,
   \mathbb{E}_{\mathcal{F}_t}\Bigl[ \mathbb{E}\bigl[\exp(\lambda S_t) \mid \mathcal{F}_t\bigr] \Bigr] \\
&= \exp(\lambda^2 \nu^2) \,
   \mathbb{E}\bigl[\exp(\lambda S_t)\bigr] \quad \text{(law of total expectation again)} \\
&\leq \exp(\lambda^2 \nu^2) \cdot \exp\bigl(t\lambda^2 \nu^2\bigr) \quad \text{(by induction hypothesis)} \\
&= \exp\bigl((t+1)\lambda^2 \nu^2\bigr), \qquad \forall \lambda \in [0, 1/b].
\end{align*}
This completes the induction. Therefore, $S_t$ is a zero-mean, $(\sqrt{2t}\,\nu, b)$ sub-exponential random variable.

\noindent\textbf{Step 3.} Apply the result of Step 1 to $S_t$, with parameters $\nu_1 = \sqrt{2t}\,\nu$, $b_1 = b$, and set $u = t s > 0$. Then,
\begin{equation}
\mathbb{P}(S_t > t s) \leq \exp\left(-\min\left\{\frac{(t s)^2}{2 (\sqrt{2t}\,\nu)^2},\frac{t s}{2b}\right\}\right)
= \exp\left(-\min\left\{\frac{s^2 t}{4\nu^2},\frac{s t}{2b}\right\}\right).
\end{equation}
Consequently,
\begin{equation}
\mathbb{P}\left( \frac{1}{t} \sum_{k=0}^{t-1} Z_k > s \right)
= \mathbb{P}\left( \frac{1}{t} S_t > s \right)
= \mathbb{P}(S_t > t s)
\leq \exp\left( -\min\left\{ \frac{s^2 t}{4 \nu^2}, \frac{s t}{2 b} \right\} \right).
\end{equation}
This completes the proof of Lemma \ref{lemma:oracle_error_bound}.
\end{proof}

\subsection{Proof of Theorem \ref{thm:single_run_convergence}}
\label{app:single_run_details}


\paragraph{Explicit Form of Constants.}
We first provide a explicit form of the constants in \cref{thm:single_run_convergence}.
The constants $C_1$, $C_2$, $\rho_1$, and $\rho_2$ used in Theorem \ref{thm:single_run_convergence} originate from the analysis of the ALOE algorithm's convergence rate under stochastic oracles~\cite{jin2021high}. Their values depend on:
\begin{itemize}
    \item The sample size $N = |\mathscr{S}|$ for the zeroth-order oracle, which determines $\varepsilon_W = 1/(2\sqrt{N})$ and the sub-exponential parameters $(\nu, b)$ of the error, as given in Theorem \ref{thm:zeroth_order_oracle}.
    \item The smoothing radius $\sigma$, the gradient Lipschitz constant $L$ (within $U$), and the problem dimension $n$.
    \item The internal ALOE parameters $(\alpha_{\max}, \theta, \gamma, \kappa)$ and the first-order oracle failure parameter $\delta$.
\end{itemize}
Following the convergence theorem for ALOE \cite{jin2021high} and applying our Lemma \ref{lemma:oracle_error_bound} to handle dependent errors, we obtain the convergence rate. Specifically, the analysis in \cite[Prop.~3]{jin2021high} establishes that the number of iterations $K_{\min}$ needed to find an $\epsilon$-stationary point satisfies.
:
\begin{equation}
\mathbb{P}(K_{\min} > t) \leq \exp\left(-\frac{(p-\hat{p})^2}{2p^2}t\right) + \mathbb{P}\left(\frac{1}{t}\sum_{k=0}^{t-1} \left(\mathbf{e}_k+\mathbf{e}_k^+ - \mathbb{E}[\mathbf{e}_k+\mathbf{e}_k^+]\right) > s\right),
\end{equation}
for parameters $p, \hat{p}, s$ defined therein. 

In our setting, applying Lemma \ref{lemma:oracle_error_bound} bounds the second term by $\exp\left(-\min\left\{\frac{s^2 t}{4\nu^2}, \frac{s t}{2b}\right\}\right)$. Setting $C_1=C_2=1$, $\rho_1 = \exp\left(-\frac{(p-\hat{p})^2}{2p^2}\right)$, and $\rho_2 = \exp\left(-\min\left\{\frac{s^2}{4\nu^2}, \frac{s}{2b}\right\}\right)$ yields the simplified form $C_1 \rho_1^K + C_2 \rho_2^K$ used in our theorem statement. The existence of suitable $\hat{p}$ and $s$ ensuring $\rho_1, \rho_2 < 1$ is guaranteed by the conditions in \cite{jin2021high} when all the parameters are chosen appropriately.

\paragraph{Proof of Theorem \ref{thm:single_run_convergence}.}
The proof is based on the condition that the initial point $\boldsymbol{x}_0$ lies within the neighborhood $U$ where the gradient $\nabla W_\varphi(\boldsymbol{x})$ is $L$-Lipschitz continuous.
The core of the proof is to apply the high-probability complexity analysis of the ALOE algorithm \cite[Prop. 3]{jin2021high} to our objective function $W_\varphi(\boldsymbol{x})$, while carefully handling the dependencies introduced by our stochastic oracles.

Let $K_{\min} = \min\{ k \geq 0 : \boldsymbol{x}_k \in \mathcal{S}(\epsilon) \}$, with the convention that $K_{\min} = \infty$ if no such $k$ exists. The analysis in \cite{jin2021high} provides the following bound: for any sufficiently large $t \geq 0$,
\begin{equation}
\mathbb{P}(K_{\min} > t) \leq \exp\left(-\frac{(p-\hat{p})^2}{2p^2}t\right) + \mathbb{P}\left(\frac{1}{t}\sum_{k=0}^{t-1} Z_k > s\right),
\label{eq:jin_general_bound}
\end{equation}
where $Z_k = \mathbf{e}_k + \mathbf{e}_k^+- \mathbb{E}[\mathbf{e}_k+\mathbf{e}_k^+]$, and $p, \hat{p}, s$ are positive constants whose specific definitions (involving $\epsilon$, $L$, $\varepsilon_W$, $\varepsilon_g$, etc.) are given in \cite{jin2021high}. The first term captures the convergence rate of an idealized process, while the second term accounts for the accumulation of stochastic oracle errors.

The critical step for our problem is to bound the second term in \eqref{eq:jin_general_bound} \emph{without assuming the sequence $\{Z_k\}$ is independent}. This is where our problem-specific structure comes into play. Lemma \ref{lemma:oracle_error_bound}, which leverages the $0$-$1$ loss structure, directly provides a concentration inequality for the dependent sum:
\begin{equation}
\mathbb{P}\left(\frac{1}{t}\sum_{k=0}^{t-1} Z_k > s\right) \leq \exp\left(-\min\left\{\frac{s^2 t}{4\nu^2}, \frac{s t}{2b}\right\}\right),
\end{equation}
where $\nu$ and $b$ are the sub-exponential parameters from Theorem \ref{thm:zeroth_order_oracle}.

Substituting this bound into \eqref{eq:jin_general_bound} and setting $t = K$ (the total number of ALOE steps in our run), we obtain:
\begin{align*}
\mathbb{P}(K_{\min} > K \mid \boldsymbol{x}_0 \in U) &\leq \exp\left(-\frac{(p-\hat{p})^2}{2p^2}K\right) + \exp\left(-\min\left\{\frac{s^2 K}{4\nu^2}, \frac{s K}{2b}\right\}\right) \\
&= C_1 \rho_1^K + C_2 \rho_2^K,
\end{align*}
where we define $C_1 = C_2 = 1$, $\rho_1 = \exp\left(-\frac{(p-\hat{p})^2}{2p^2}\right)$, and $\rho_2 = \exp\left(-\min\left\{\frac{s^2}{4\nu^2}, \frac{s}{2b}\right\}\right)$. Clearly, $\rho_1, \rho_2 \in (0,1)$.

Finally, note that if $K_{\min} \leq K$, then the algorithm's output $\boldsymbol{x}^+$ (which is either $\boldsymbol{x}^+_{K_{\min}}$ or a later point) must be in $\mathcal{S}(\epsilon)$, because once the iterate enters the stable set, the gradient condition is satisfied. More formally, $\{\boldsymbol{x}^+ \notin \mathcal{S}(\epsilon)\} \subseteq \{K_{\min} > K\}$. Therefore,
\begin{equation}
\mathbb{P}\left( \boldsymbol{x}^+ \in \mathcal{S}(\epsilon) \mid \boldsymbol{x}_0 \in U \right) \geq 1 - \mathbb{P}(K_{\min} > K \mid \boldsymbol{x}_0 \in U) \geq 1 - C_1 \rho_1^K - C_2 \rho_2^K.
\end{equation}
This completes the proof of Theorem \ref{thm:single_run_convergence}. \hfill $\square$

\subsection{Proof of Theorem \ref{thm:main_complexity}}
\label{app:main_theorem_proof}
Recall that $q := \mathbb{P}(\boldsymbol{x}_0 \in U)$. 
This proof employs the conditional convergence theorem to establish the overall complexity bound.

\textit{Proof of Theorem \ref{thm:main_complexity}.}
Let $\boldsymbol{x}_{0,1}, \dots, \boldsymbol{x}_{0,M}\in D$ be the independent initial points for the $M$ trials in \cref{alg:main_procedure} according to a distribution that assigns positive measure to $U$. Let $\boldsymbol{x}^+_{m}$ be the output of the $m$-th ALOE run.

Define the events:
\begin{align*}
    A &:= \{\exists m \in \{1,\dots,M\} \text{ such that } \boldsymbol{x}_{0,m} \in U \}, \\
    B_m &:= \{\boldsymbol{x}^+_{m} \in \mathcal{S}(\epsilon) \}, \quad \text{for } m = 1,\dots,M, \\
    E &:= \{ |V^* - l| < \epsilon_0 + \epsilon d \}.
\end{align*}
Our goal is to lower-bound $\mathbb{P}(E)$. From \eqref{eq:error_bound_from_stability}, we know that if for some $m$, both $\boldsymbol{x}_{0,m} \in U$ and $B_m$ occurs, then the final certified lower bound $l$ (which is at least $B(\boldsymbol{x}^+_{m}; \epsilon_0)$) will satisfy the error bound. More formally, $E$ occurs if the event $A \cap (\bigcup_{\{m\mid \boldsymbol{x}_{0,m} \in U\}} B_m)$ occurs.

We analyze this probability by conditioning on the initial points.
\begin{align*}
\mathbb{P}(E) &\geq \mathbb{P}\left( A \cap \bigcup_{\{m\mid \boldsymbol{x}_{0,m} \in U\}} B_m \right) \\
&= \sum_{m=1}^{M} \mathbb{P}\left( \boldsymbol{x}_{0,m} \in U, B_m, \text{ and } \boldsymbol{x}_{0,j} \notin U \text{ for all } j < m \right) \\
&= \sum_{m=1}^{M} \mathbb{P}\left( \boldsymbol{x}_{0,m} \in U \right) \cdot \mathbb{P}\left( B_m \mid \boldsymbol{x}_{0,m} \in U \right) \cdot \prod_{j=1}^{m-1} \mathbb{P}\left( \boldsymbol{x}_{0,j} \notin U \right) \quad \text{(by independence)} \\
&= \sum_{m=1}^{M} q \cdot \mathbb{P}\left( B_m \mid \boldsymbol{x}_{0,m} \in U \right) \cdot (1-q)^{m-1}.
\end{align*}

Now, by Theorem \ref{thm:single_run_convergence}, for any $m$ such that $\boldsymbol{x}_{0,m} \in U$, we have $\mathbb{P}\left( B_m \mid \boldsymbol{x}_{0,m} \in U \right) \geq 1 - C_1 \rho_1^K - C_2 \rho_2^K =: P_{\text{succ}}$. Note that $P_{\text{succ}}$ is independent of $m$ and the other trials.

Therefore,
\begin{align*}
\mathbb{P}(E) &\geq \sum_{m=1}^{M} q \cdot P_{\text{succ}} \cdot (1-q)^{m-1} \\
&= P_{\text{succ}} \cdot q \sum_{m=1}^{M} (1-q)^{m-1} \\
&= P_{\text{succ}} \cdot \left(1 - (1-q)^M\right) \quad \tag{sum of finite geometric series} \\
&\geq P_{\text{succ}} \cdot \left(1 - \exp(-qM)\right) \quad \tag{using  $1-q \leq e^{-q}$ }\\
&= \left(1 - C_1 \rho_1^K - C_2 \rho_2^K\right) \cdot \left(1 - \exp(-qM)\right).
\end{align*}

This completes the rigorous proof of Theorem \ref{thm:main_complexity}. \hfill $\square$

\section{Hyper-Parameters for {\cref{alg:main_procedure}}}
\label{app:parameters}
\begin{table}[htbp]
    \centering
    \caption{Algorithm parameters for the SSMT case studies.}\label{tab:ssmt_params}
    \begin{tabular}{lc}
        \toprule
        \textbf{Parameter} & \textbf{Value} \\
        \midrule
        Zeroth-order oracle sample size ($N = |\mathscr{S}|$) & $100$ \\
        First-order oracle direction samples ($|\mathscr{U}|$) & $50$ \\
        Smoothing radius ($\sigma$) & $0.1$ \\
        Zeroth-order oracle tolerance ($\varepsilon_W$) & $0.03$ \\
        ALOE initial step size ($\alpha_0$) & $1.0$ \\
        ALOE max step size ($\alpha_{\max}$) & $4.0$ \\
        ALOE step decay ($\gamma$) & $0.8$ \\
        ALOE Armijo constant ($\theta$) & $0.2$ \\
        \bottomrule
    \end{tabular}
\end{table}

\end{document}